\documentclass[12pt]{article}
\usepackage{setspace}
\usepackage{amsmath}
\usepackage{amssymb}
\usepackage{amsfonts}
\usepackage{amsthm}
\usepackage{eurosym}
\usepackage{lastpage}
\usepackage{bbm}
\usepackage{stackengine}
\usepackage{subcaption}
\usepackage{fancyhdr}
\usepackage{graphicx,color}
\usepackage{setspace}
\usepackage{multirow}
\usepackage{xcolor}
\usepackage{tabularx}
\usepackage{wrapfig}  
\usepackage{longtable}
\usepackage{threeparttable}
\usepackage{bm}
\usepackage{lipsum}
\usepackage{algorithm}
\usepackage{algorithmic}
\usepackage[euler-digits]{eulervm}
\usepackage[font=small,font=bf]{caption}
\definecolor{brightmaroon}{RGB}{131,3,0}
\usepackage[authoryear]{natbib} %
\usepackage[colorlinks = true,
            backref=page,
            citecolor  = brightmaroon,
            linkcolor  = blue,
            urlcolor   = blue]{hyperref}
\usepackage[a4paper, portrait, margin=1in]{geometry}

\graphicspath{{./Figures/}}

\usepackage{mathtools}
\mathtoolsset{showonlyrefs}

\usepackage{tikz}
\usetikzlibrary{positioning,chains,fit,shapes,calc,arrows,backgrounds,3d,matrix,arrows.meta}

\usepackage{pifont}
\usepackage{diagbox}
\definecolor{Red}{RGB}{255,0,0}
\definecolor{ForestGreen}{RGB}{34, 139, 34} %
\newcommand{\mytick}{{\color{ForestGreen}\ding{52}}}
\newcommand{\mycross}{{\color{Red}\ding{56}}}

\DeclareSymbolFont{Symbols}{OMS}{zplm}{m}{n}%
\DeclareMathSymbol{\Infty}{\mathord}{Symbols}{"31}
\usepackage{mathpazo} %
\usepackage{array}
\newlength{\noteslabelwidth}
\renewenvironment{tablenotes}{%
  \par\vspace{-0.5\baselineskip}%
  \footnotesize
  \justifying
  \singlespacing
  \noindent
 \settowidth{\noteslabelwidth}{\textbf{Notes:}}%
  \begin{tabular}{p{\dimexpr\linewidth\relax}@{}}
  \textbf{Notes:} 
}{%
  \end{tabular}

  \par
}

\renewcommand*{\backrefalt}[4]{%
    \ifcase #1%
     \or (Located on page:~#2)%
     \else (Located on page:~#2)%
    \fi%
    } 

\usepackage{ragged2e}

\renewcommand{\ni}{\noindent}
\renewcommand{\Vec}{\text{vec}}

\newtheorem{lem}{Lemma}
\newtheorem{ass}{Assumption}
\newtheorem{dfn}{Definition}

\makeatletter
\def\@fnsymbol#1{\ensuremath{\ifcase#1\or *\or \mathsection\or \mathparagraph\or \or \dagger \ddagger\or \| \else\@ctrerr\fi}}
\makeatother

\title{Can looser ties sustain marriage? A dynamic matching model of specialisation and divorce\thanks{\hangindent=.66cm For helpful comments and discussions, we thank James Banks, Pauline Corblet, Edoardo Ciscato, Alejandro Robinson-Cort\'es, Jeremy Fox, Alfred Galichon, Lars Nesheim, David Pacini, Pietro Spini, Matt Shum, and all participants of seminars and workshops, including the Exeter Matching Markets and Inequality Workshop, Bristol Econometric Study Group, SEHO, NASM, ESWC, EWMES, IAAE, AGEW, ESPE, and the Bristol and Melbourne research seminar. This research has been supported by a Faculty Research Grant funding provided by the University of Melbourne. Any mistakes are our own.\bigskip} }

\usepackage[blocks]{authblk}
\author{%
\begin{minipage}[t]{0.45\textwidth}
\centering
Stefan Hubner\thanks{Department of Economics, \url{stefan.hubner@bristol.ac.uk}}\\
University of Bristol
\end{minipage}
\hfill
\begin{minipage}[t]{0.45\textwidth}
\centering
Jan Kabatek\thanks{Melbourne Institute of Applied Economic and Social Research, \url{j.kabatek@unimelb.edu.au}}\\
University of Melbourne
\end{minipage}
}

\newif\ifbeamer
\beamerfalse

\date{June 2026} 

\makeatletter%
\begin{document} 

\maketitle

\renewcommand{\thefootnote}{\arabic{footnote}}
\setcounter{footnote}{0}

\vspace{-\baselineskip}
\begin{abstract}
\ni Durable marriages are presumed to foster the household specialisation that marriage enables.
We exploit a recent Dutch reform that temporarily lowered the cost of divorce while leaving consent requirements unchanged.
We embed divorce hazards obtained from population-level administrative data into a dynamic structural matching model in which individuals repeatedly match and choose marital roles.
We identify the structural parameters by fitting the model to the equilibrium matching distribution over time, using a novel computational approach.
Compared to the high-cost counterfactual, we find that more couples choose marriage when divorce costs are lower, as higher rates of marriage entry outweigh the rise in divorce.
Because specialisation is preserved, aggregate welfare rises. 

\end{abstract}

\hangindent=2.75cm Keywords: divorce costs, marriage markets, marriage contracts, collective choice, random utility models, dynamic programming %

\hangindent=2.75cm JEL codes: J12, C54, C63, C78. \bigskip

\thispagestyle{empty}

\newpage

\newpage \setcounter{page}{1}
\section{Introduction} 

Marriage rates in the United States and Western Europe have fallen by more than 30\% over the last thirty years, with couples increasingly opting for cohabitation out of wedlock \citep{oecd2025marriage}.
This is consequential, because many cohabiting couples forgo the protections and benefits associated with marriage.
The literature \citep{Stevenson2007} attributes the falling popularity of marriage to broader structural changes in the marriage market, including changes in fertility control, household
production, and other secular trends.
We show that the overall cost of marriage, which has received less attention, is also an important contributor towards this trend.

Marriages come with substantial financial costs. These include the costs of the wedding, but also the costs of divorce. Indeed, divorces can be very expensive even when both spouses consent, and  
this may deter couples from formalising their relationships. A policymaker might therefore consider lowering the divorce costs in an attempt to foster marriage.\footnote{The administrative cost for the civil marriage ceremony is already low (\euro 100--\euro 200), with discretionary spending on the celebration varying widely with couples' preferences.} 
However, lower costs of divorce may also encourage existing couples to break up, which renders the net effect on marriage incidence theoretically ambiguous.\footnote{The two effects follow the insider-outsider logic of \citet{Lindbeck2001}: lower costs reduce the expected price of entry for outsiders (the unmarried) and the price of exit for insiders (the married).}
On top of this, even if the overall number of marriages remained unaffected, easier divorce may lower aggregate welfare by weakening marital investments (specialisation) and its associated positive externalities \citep{Becker1973}.

In this paper, we investigate what happens to marriage, and to the welfare it generates, when divorce costs are lowered.
Studying this question is difficult for two reasons. First, married couples are rarely subject to exogenous changes in divorce costs. 
Indeed, most divorce studies analyse a shift in consent rules rather than prices (\textit{e.g.}, \citealp*{wolfers2006,fernandez2017,Calvo2025}).
We exploit an accidental change of the Dutch family law that produced a large and long-lasting reduction in the effective costs of divorce. The change in question was the 2001 legalisation of the administrative divorce option. Colloquially named ``flash divorce'', this option enabled Dutch couples to settle their divorces out of court, waiving the requirement of legal representation and reducing the associated costs, often by several thousands of Euros. 
After eight years, in 2009, the Dutch government abolished the administrative option, reverting to the original high-cost regime. This unique policy variation enables us to investigate the marriage market responses to a large and plausibly exogenous change in  divorce costs. 
The second difficulty is that the welfare consequences of a price change cannot be identified solely from the responses of marginal couples. 
Half of the divorcees eventually re-partner, which shifts the outside option of every couple still together, married or cohabiting. This can have substantive consequences for the equilibrium structure of the marriage market, with the full effects of the price change manifesting only after a number of years. It is also worth noting that, for the divorcees, the very roles that might have created positive externalities within their marriage may now work against them.\footnote{Indeed, while the human capital accrued through market work may act akin to college premium (increasing the attractiveness of a person re-entering the marriage market), the human capital accrued through household specialisation likely depreciates to a large extent once the couple divorces.}\textsuperscript,\footnote{For a study on collective bargaining after re-marriage see \citet{Bonke2009a,Bonke2009}.} %
Our setting enables to account for these marriage market features. This is because we study the markets through the lens of repeated matching, where individuals are described by their \emph{marital status} (single, cohabiting, or married) and their \emph{household roles} (market work or home production). 
Both characteristics are endogenous and the result of decisions based on preference primitives and the market environment.

We observe the equilibrium outcomes: who is paired with whom, who is single, who has specialised in which role, and how these change over time.
As a classical inverse problem \citep{Becker1973}, we ask which preferences rationalise this equilibrium path, which we obtain from a rich micro-level dataset constructed from linked population registers.
We build a structural model to study the dynamics of the Dutch marriage market and evaluate its equilibrium adaptations induced by the changes of divorce costs.
Three sources of variation identify counterfactual trajectories of our economy.
First, the cross-sectional composition of couples across types identifies the surplus generated by each match type. 
Second, the 2009 ban identifies the drop in divorce hazards for any of the married couple's types described by the respective partner's roles which we relate to the insider surplus of staying in each type of marriage.
Third, individual-level transitions identify the type-transition kernel non-parametrically.\footnote{Repeated cross section and even panel surveys are generally not sufficient for this. Indeed, even with large samples, transitions cannot be identified non-parametrically at the resolution of the type space because cells become sparse quickly. With sampling uncertainty of the transition kernel, a zero might only be a sampling artifact but would shut down an entire transition path. This can be ruled out with administrative population-level data. In addition, we do not lose either ex-spouse after a divorce as is often the case in surveys.}

The model treats divorce as a matching decision of insiders against the outside option, and marriage as the inflow of outsiders which is governed by the transition kernels. This separation of divorce and marriage allows us to disentangle the insider and outsider channels.
Their separate identification comes from the reform, which shifts the cost of divorce abruptly, while the stock of matches that identifies surplus adjusts only gradually. 
Exploiting this difference in timing requires both the reform and the administrative data, and must be operationalised through a dynamic model without steady-state restrictions.\footnote{For example, without transitions the surplus would have to be recovered jointly from the matching data which only identifies net flows. Disentangling them, as in \citet*{Gousse2017}, requires a steady-state restriction which is ruled out by the nature of our reform. Note that, even if there is no effect of the reform on divorce rates, the marriage market is still in constant transition.} 

We find that lower divorce costs induce more divorces among those who are already married, but they induce even more marriage entries, yielding a positive net effect on marriage incidence.
We find no evidence of a decline in specialisation among those who remain married. A counterfactual simulation shows that the long-term aggregate welfare would have been 17 percent higher, if the administrative divorce option remained in place.
While dual earner couples respond to the reform, male-breadwinner couples remain largely insulated on the divorce margin, which reflects the complexity of their dissolutions (due to alimony and property division).

The literature on the welfare effects of divorce is largely a literature on unilateral-divorce reform. There, the change in family law shifts the right to dissolve from one spouse to the other and operates through bargaining inside the household. \citet*{Chiappori2002}, \citet{mazzocco2007}, \citet{Voena2015}, \citet{fernandez2017}, and \citet*{LowMeghirPistaferriVoena2018} trace the consequences through intra-household allocation, savings, and labour supply. In these models divorce is exit into a singles value function. 
Building on \citet*{ChiapporiCostaDiasMeghir2018}, \citet{reynoso2024} embed the collective household inside a marriage market with endogenous sorting, and show that family law reshapes who marries whom. Their matching is one-shot, on premarket education types, and divorcees do not return to it.

Using observational data, \citet*{Bruze2015} estimate a finite-horizon cohort-life-cycle model of marriage choices using Danish administrative data. They use flows of marriage and divorce to identify marital surplus.
\citet{Flinn2026} consider a dynamic continuous-time model of marriage and labour supply decisions.

We consider a reform that is a change in price, not in rights.
It operates through the marriage market rather than through intra-household bargaining.
Thus, we identify marital surplus through a matching model with transferable utility (\citealp{Choo2006}, \citealp{GalichonSalanie2022}).
We incorporate the divorce-hazards from the regression discontinuity into their two-sided logit framework (Lemma \ref{lem:hazard}). Because  the type-specific distribution of unobserved preference parameters is identified directly from the cross-section of matches we can recover the surplus shift associated with divorce costs for each marriage type.
The repeated nature of our model, which captures the induced long-run market adjustments, puts us close to \citet{ChenChoo2023}, with our types being endogenous and model-determined with dynamics as in \citet*{Corblet2023}, whose planner-equilibrium equivalence result we rely on.

Since both the permanent dynamics of the marriage market and the transient nature of our reform rule out identification based on a steady-state equilibrium we have to solve the full dynamic programme.
We identify the surplus parameters by minimising the distance (Kullback-Leibler-divergence) between the model-implied and empirical conditional distributions of couples.\footnote{Conditioning on the marginals insulates the estimator from the accumulation of small errors in the transitions between types.}
Solving the dynamic programme in each step is computationally difficult due to the continuous state space describing population masses across the type space.
Thus, we approximate the value and policy functions jointly with a neural network \citep*{maliar2021}.
We enforce the market-clearing conditions through a \citet{sinkhorn1964} matrix scaling in the final layer.
We, further, introduce a penalisation of the approximation error of the value function to make the algorithm adhere to the model's dynamic structural restrictions.

The paper is structured around its identification logic. Section \ref{sec:policy} describes the institutional context and the policy that separates the two divorce-cost regimes. Section \ref{Economic Theory} develops the structural model. Section \ref{Data and Descriptive Statistics} presents the data that identify the model. Section \ref{Results} discusses our results and counterfactual simulations. Section \ref{Conclusions} concludes.

\section{Institutional context}\label{sec:policy}

Before we dive into the theory and data sections, we present the key institutional features of the Dutch family law system. Further details can be found in Appendix \ref{sec:institutions}.

\subsection{Marriage and Divorce in the Netherlands}

Marriage in the Netherlands follows patterns broadly comparable to those observed in other developed countries. As shown in Figure \ref{fig_mar_rate}, the Dutch marriage rate is similar to those in other western European countries and the United Kingdom. Its evolution over time is also comparable: over the 20-year period of observation, marriage rates in the plotted countries fell by nearly 30\%, consistent with the broader de-institutionalisation of marriage \citep{Cherlin2004, Stevenson2007}.

\begin{figure}[h!]
\caption{Marriage rates in the Netherlands, UK, and Western Europe: 1995--2016}
\centering
\includegraphics[trim={0 0 0 0},clip,width=0.62\textwidth]{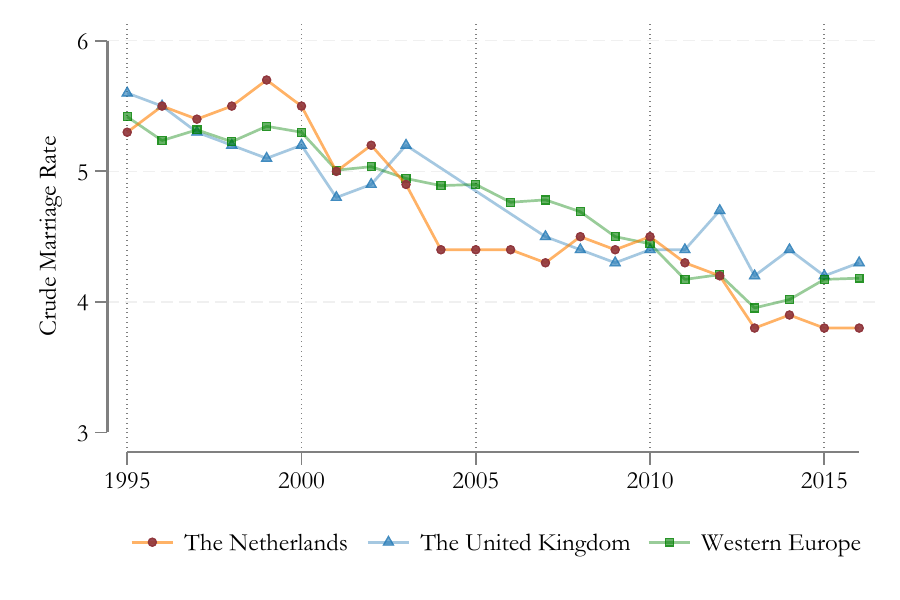}
\label{fig_mar_rate}
\vspace{-0.2cm}
\begin{tablenotes}
Data from OECD. Rates represent crude marriage rate: the number of marriages commenced in the given year per 1000 inhabitants.  
\end{tablenotes}
\end{figure} 

From the procedural perspective, Dutch couples enter marriage by registering their intent with the municipality and participating in a civil ceremony. Divorce, by contrast, requires a petition to the family court, and it is only granted once the court determines that the marriage is irretrievably broken. Due to the involvement of the judicial system, divorce is associated with substantial costs. Couples are required to retain legal representation and pay fees for court filings and notary services. This means that, even for an uncontested divorce, the costs can amount to \EUR{3,000}-\EUR{5,000}.\footnote{Low-income couples may qualify for subsidised legal aid and discounted court fees.}

These costs are comparable to the costs observed in countries where court proceedings are also mandatory (\textit{e.g.,} most European countries, Canada, and Australia).\footnote{Reliable statistics on divorce costs are not systematically collected, which makes rigorous cross-country comparisons difficult. The discussion in this section is based on information retrieved from family law discussion boards, practitioner guides, and indicative fee schedules published by law firms and legal service providers in the respective countries.} In countries where court proceedings are not mandatory (e.g., Scandinavia, the Baltics, Portugal, Spain, and some U.S. states), divorce costs can be substantially lower, in some cases as low as \EUR{80}. The markedly lower costs in the latter group stem from the availability of the administrative divorce option. This procedure allows couples to dissolve their marriages without the need to engage lawyers or appear before a judge. Aimed at couples who agree on all aspects of their union dissolution, it reduces the divorce process to filling out a simple standardised form to update the marriage registry at the local municipality.

\subsection{Accidental emergence of the administrative divorce option}

Central to our study is a unique policy experiment: the accidental introduction and eventual repeal of the administrative divorce option in the Netherlands. This experiment was an unintended consequence of the 2001 Marriage Equality Act, a legislation that enabled the Netherlands to become the first country to legalise same-sex marriage.  

Besides its intended purpose, the Marriage Equality Act created a legal loophole that allowed married couples to exercise the administrative divorce option.
The option quickly gained popularity and became known as the \textit{flitsscheiding} (``flash divorce''), a nickname that reflected its simplicity compared to the conventional divorce procedure. The conventional procedure remained unaffected by the Act, which meant that, from April 2001 onward, married couples in the Netherlands could choose between the judicial and administrative divorce pathways.
Though it emerged by accident, the dual regime was maintained for eight years, coming to an end only in April 2009, when the administrative option was rescinded through an amendment of the Dutch civil code. This change was not instigated by public demand or a landmark legal case, but rather represented a return to the legislative status quo \citep{Aygun2015}.

More than 30,000 couples had taken advantage of the administrative option over the span of its availability.\footnote{This is according to our own calculations, consistent with the report of \citet{statnl2010}.}
Figure \ref{fig_div_inc_NL} demonstrates that this uptake did not reflect a mere procedural substitution, where couples who would have divorced anyway were simply opting for the cheaper procedure. Rather, the divorce rates during the period of flash divorce availability were, on average, approximately 10 percent higher than the divorce rates in the surrounding years.

\begin{figure}[h!]
\caption{Crude divorce rates in the Netherlands: 1995-2016}
\includegraphics[trim={0 0 0 0},clip,width=\textwidth]{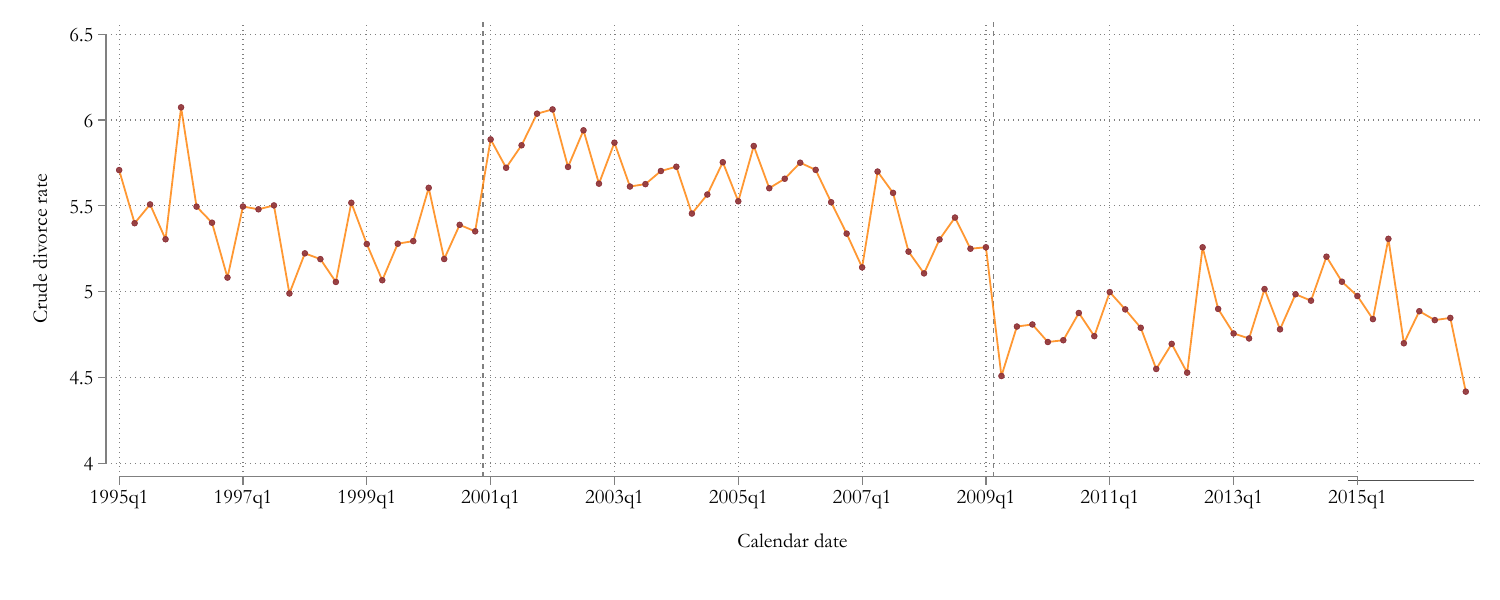}
\label{fig_div_inc_NL}
\vspace{-0.8cm}
\begin{tablenotes}
Data from Statistics Netherlands (CBS). Crude divorce rate represents the number of divorces per 1000 inhabitants. Plotted quarterly rates were annualised to facilitate exposition. Introduction and repeal of the administrative divorce option are highlighted by the dashed vertical lines.
\end{tablenotes}
\end{figure} 

Acknowledging there could be macroeconomic factors confounding this series, we compare the Netherlands to culturally similar controls.
Figure \ref{fig_div_inc_EUR} plots the raw national divorce rates of the selected countries.\footnote{The large increase of divorce rates recorded in Denmark in the years 2013 and 2014 (blue line) is attributed to the introduction of a very similar administrative divorce option \citep{Fallesen2021}. The subsequent downward adjustment recorded in 2015 is considered to be an artefact of procedural changes in the Danish marriage registry.}
Figure \ref{fig_div_inc_EUR_NL} reports the Dutch divorce rates against a synthetic control of those countries with uniform weights.
There is clear excess divorce in the Netherlands within the low-cost period, with no notable discontinuities in the control group around the two respective cut-off dates.

\begin{figure}[h!]
\caption{Crude divorce rates in select European countries: 1995 -- 2016}
\begin{subfigure}{0.48\textwidth}
\vspace{0pt}
\includegraphics[trim={18 0 0 0},clip,width=\textwidth,height=4.75cm]{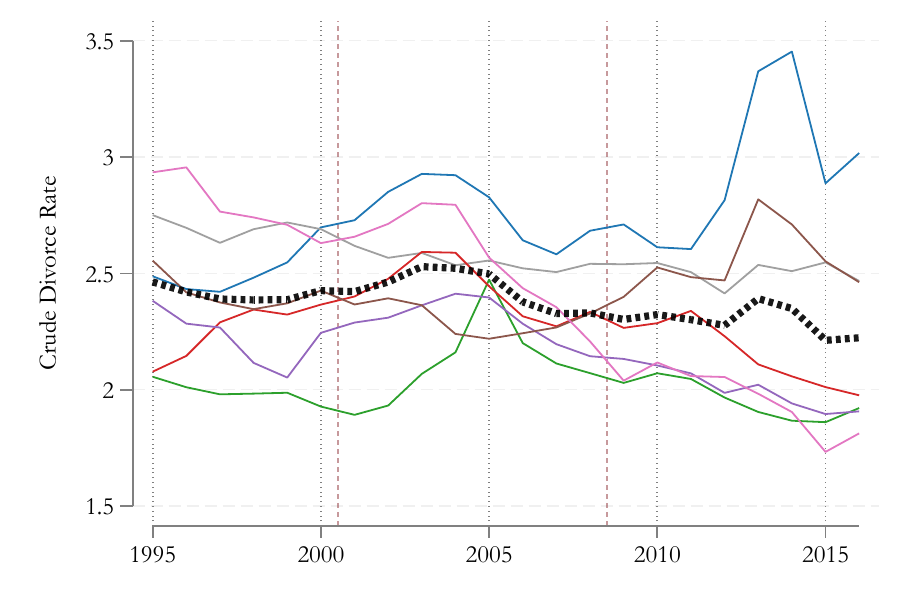}
\caption{Selected European Countries}
\label{fig_div_inc_EUR}
\end{subfigure}
\hfill
\begin{subfigure}{0.48\textwidth}
\vspace{0pt}
\includegraphics[trim={18 0 0 0},clip,width=\textwidth,height=5cm,keepaspectratio]{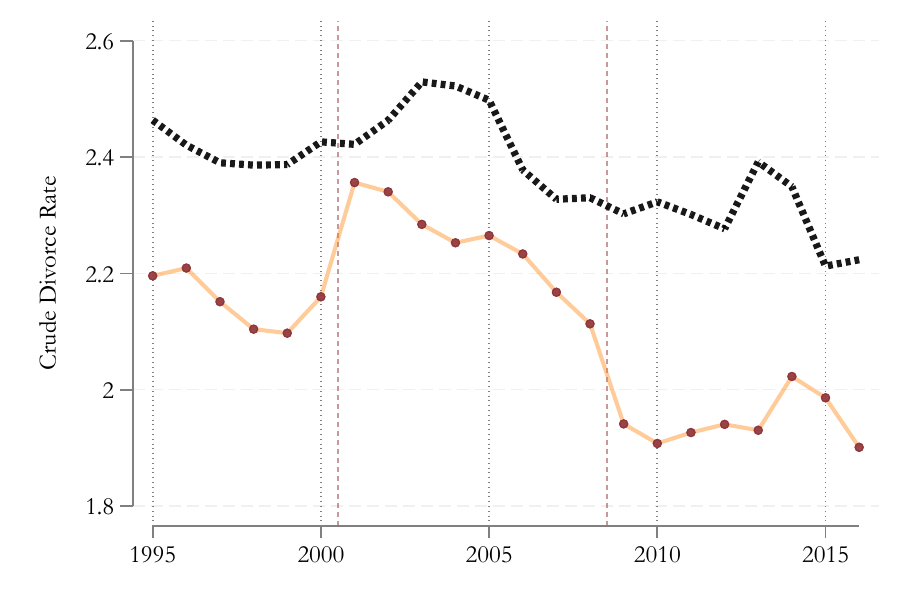}
\caption{The Netherlands vs Synthetic Control}
\label{fig_div_inc_EUR_NL}
\end{subfigure}

\begin{tablenotes}
Data from Eurostat. Top to bottom in Fig 2A (at the start of the observation period): UK, FI, SE, DK, Synthetic Control with uniform weights, NO, DE, FR. 
\end{tablenotes}
\end{figure}
 
\section{Model}
\label{Economic Theory}

To explore the marriage market dynamics and welfare changes induced by the availability of the administrative divorce option, we employ a structural model. 
At the heart of our model are individual decisions of marriage market participants. These decisions are made within the context of a broader marriage market population and, thus, constrained by scarcity (which partners are available) and subject to prices (such as the costs of divorce). The equilibrium state of this population evolves over time, reflecting endogenous changes of the marriage market (\textit{e.g.}, higher divorce rates under the low-cost regime) and broader demographic trends. 
Holding preferences fixed, this framework allows us to evaluate the adjustment path of the marriage market in response to the changing divorce costs, and calculate counterfactual welfare under alternative policy scenarios. 

\subsection{Individual Decisions}
\label{theoryindiv}
First, we introduce the individual decision-making process.
In each period, we consider a two-sided matching market where every individual can match to exactly one person from the other side of the market, or stay single.
Each match generates a surplus $ \Phi $ through joint consumption of market goods and the output of household production. 
Instead of explicitly modelling the trade-off between continuous consumption and time spent on home production, we approximate it by a discrete choice between specialisation in home production $ \sf{H} $ and market work $ \sf{W} $. 
This specialisation is contingent on being married.\footnote{This is an empirical stylised fact and can easily be generalised within our framework.}
Denoting currently married individuals as $ \sf{C} $ and unmarried individuals as $ \sf{U} $, we get a characterisation of individuals into types $ m, f \in \mathcal{I} \equiv \{\sf{U}, \sf{CH}, \sf{CW} \} $.
These types are model-determined and subject to endogenous decisions, which means that they can evolve over time.

The matching rubric is defined as follows. Unmarried individuals can either match with an unmarried partner or remain single. Married individuals can match with a married partner (of any type), or, in the event of divorce, become single.
The latter margin identifies insider marriage surplus from the matching decision against the outside option of divorce for each type. 
Divorced individuals retain their type from the previous marriage up to the point when they find a new partner.
This modelling choice allows us to distinguish re-partnering rates by marriage roles. 
In the matching rubric, we denote single households (never married and divorcees) as individuals matched with a dummy type $ 0 $. This leads to the finite characterisation of feasible matches in Table \ref{tab:feasible}.

\begin{table}[h!]
\caption{Feasible Types of Matches} 
\label{tab:feasible}
\centering
\begin{tabular}{|l|c|c|c|c|}
\hline
\diagbox{\textbf{men}}{\textbf{women}}& unmarried & married \& home prod. & married \& working & 0 \\ \hline
unmarried & \mytick  &  \mycross  & \mycross  &  \mytick  \\ \hline
married \& home prod. &  \mycross  &  \mytick  &  \mytick  &  \mytick  \\ \hline
married \& working &   \mycross &  \mytick  &  \mytick  &  \mytick  \\ \hline
0 &  \mytick  &  \mytick  &  \mytick  &   \\ \hline
\end{tabular}
\begin{tablenotes}
The feasible combination [unmarried; unmarried] denotes an unmarried cohabiting couple, [married \& ...; married \& ...] denotes a married couple with the given household specialisation type, [unmarried; 0] denotes a single individual who has never been married, and [married \& ...; 0] denotes a single divorced individual with the household specialisation type observed in the previous marriage.      
\end{tablenotes}
\end{table}

Aggregating the underlying rich household structure to this finite set of observed types introduces a systematic type-level surplus $ \Phi_{mf} $ which, in equilibrium, can be decomposed into individual utilities $ \Phi_{mf} \equiv U_{mf} + V_{mf} $.
The surpluses $ \Phi_{m0} $ and $ \Phi_{0f} $ of single men and women are normalised to zero and treated as outside options.  

Empirically, two couples with the same observed types may differ in how naturally they settle into their specialisation roles, how well they coordinate, or how productive they are in these roles. To account for such differences, we allow for idiosyncratic compatibility parameters that shift preferences over the partner's type away from what the systematic surplus alone would imply. 
The variance of these parameters is allowed to differ across types, and individuals draw their parameter realisations at the beginning of each period. 
That is, at time $ t $, a man $ i $ of type $ m $, draws a compatibility parameter for each feasible partner type $ f $, including the outside option, denoted $ \varepsilon_{ift} $ and, similarly, a woman $ j $ of type $ f $ draws $ \eta_{mjt} $ for each $ m $, both from a Gumbel distribution. 

The individual dynamics are operationalised as follows. Men and women join the matching market upon entering adulthood as unmarried types $ \sf{U} $. From this point onwards, they start making decisions whether to match with an unmarried partner or remain single.
If we were to abstract from future decisions, a man $ i $ and a woman $ j$ would match if their contemporaneous utilities, net of an equilibrium transfer $ \tau_{\mathsf{UU}} $ from $ i $ to $ j $, exceeded their respective outside options:
\begin{equation}\label{eq:qs}
  U_{\mathsf{UU}} + \sigma_\mathsf{U} (\varepsilon_{i\mathsf{U}t} - \varepsilon_{i\mathsf{0}t}) - \tau_{\mathsf{UU}}  > 0 \qquad \text{and} \qquad V_{\mathsf{UU}} + \sigma_\mathsf{U} (\eta_{j\mathsf{U}t} - \eta_{\mathsf{0}jt}) + \tau_{\mathsf{UU}} > 0 
\end{equation}
where $ \sigma_U $ are the scales of their idiosyncratic tastes to be in a relationship. %
Of course, individuals also consider their future when they choose a partner.
Denoting the inclusive values of the match in \eqref{eq:qs} as $ Q^m_{\mathsf{UU},i,t} $ and $ Q^f_{\mathsf{UU},j,t} $, respectively, and $ Q^m_{\mathsf{UU},t + 1} $ and $ Q^f_{\mathsf{UU},t+1} $ their expected values with respect to the uncertainty faced from time $ t + 1 $ onwards, under rational expectations, individuals will match if their lifetime expected surplus is positive:
$$ Q^m_{\mathsf{UU},i,t} + \beta Q^m_{\mathsf{UU},t+1} > 0 \qquad \text{and} \qquad Q^f_{\mathsf{UU},j,t} + \beta Q^f_{\mathsf{UU},t+1} > 0. $$

What makes this a truly dynamic problem is the fact that types can change.
Coupled individuals may decide to marry, which changes their match type to $ \sf{C} $.\footnote{At the individual level, partner choice and marriage are defined as sequential decisions. Marriage cannot materialise straight from two singles.}
While married, partners draw realisations of $ (\varepsilon_{i\mathsf{CH}t}, \varepsilon_{i\mathsf{CW}t}, \varepsilon_{i\mathsf{0}t}) $ and $ (\eta_{\mathsf{CH}jt}, \eta_{\mathsf{CW}jt}, \eta_{\mathsf{0}jt}) $, which determine how productive they are together in each specialisation role.
Based on this, each partner decides between one of two specialisations: home production or market work.
The household will stay married if, for some combination of roles $ mf \in \{ \mathsf{CH}, \mathsf{CW} \} \times \{ \mathsf{CH}, \mathsf{CW} \} $, there is a transfer $ \tau_{mf} $ that makes the inside value exceed the outside option for both partners:
\begin{align}
  U_{mf} + \sigma_m (\varepsilon_{ift} - \varepsilon_{i0t}) - \tau_{mf} + \kappa_{mf}\Psi_{mf}(c_d) + \beta Q^m_{mf,t+1} & > 0, \\
  V_{mf} + \sigma_f (\eta_{mjt} - \eta_{0jt}) + \tau_{mf} + (1-\kappa_{mf})\Psi_{mf}(c_d) + \beta Q^f_{mf,t+1} & > 0
\end{align}
Here, $c_d$ represents the cost of divorce. This cost depends on the policy regime $d$, with $d = 0$ denoting the high-cost regime, and $d = 1$ denoting the low-cost regime. 
Remaining married avoids the cost of divorce. We model the avoided cost as a utility gain $\Psi_{mf}(c_d)$, which we label as `stay premium'. The partners share this premium in proportions $ \kappa_{mf} $ and $ 1-\kappa_{mf} $.\footnote{The parameter $ \kappa_{mf} $ is not identified from our data, and thus, absorbed into the equilibrium transfer.}
We define each partner's inside value and continuation value of marriage analogously to those of unmarried individuals.
Because the couple's expected continuation value ($ Q_{mf,t+1} $) is the sum of both partners' values,\footnote{This follows from linearity of contemporaneous utility and the discounted continuation value. We derive this separability in detail in Appendix \ref{sec:generalmodel}.} we can write their participation in marriage as the following joint condition:
\begin{equation}
  \Phi_{mf,0} + \Psi_{mf}(c_d) + \sigma_m(\varepsilon_{ift} - \varepsilon_{i0t}) + \sigma_f (\eta_{mjt} - \eta_{0jt}) + \beta Q_{mf,t+1} > 0.
\end{equation}

The couple will choose to dissolve their union if their joint utility deficit exceeds the negative utility associated with the monetary cost for divorce.\footnote{Note that the continuation value has a stabilising effect as it balances out bad realisations of the compatibility parameters.} The divorced individuals keep their type until they re-partner, which accounts for the fact that their success on the matching market might depend on the role they had within their marriage.

Taking expectations over the idiosyncratic compatibility shocks, the marriage participation conditions induce a type-level expected surplus from a feasible match $mf$.
The divorce cost regime enters this surplus monotonically through the stay-premium $\Psi_{mf}(c_d)$.
Since we only ever observe couples under one of the two cost regimes, the level of $\Psi_{mf}$ is absorbed into the baseline surplus and only the regime
difference is identified.
We therefore write the regime-specific surplus as
$ \Phi_{mf}(d) = \Phi_{mf,0} - d\,\psi_{mf} $,
where $ \psi_{mf} \equiv \Psi_{mf}(c_0) - \Psi_{mf}(c_1) \geq 0 $ is the decline    
in the surplus of remaining married when divorce costs become low.\footnote{The surplus $ \Phi $ can, generally, depend on time $ t $ beyond the time-dependent cost regime. As an extension of our baseline empirical specification, we also estimate a specification with a linear trend balancing the trade-off between flexibility and parameter dimension.}

Aggregating individual choices to the population leads to a dynamic assignment problem. 
Given current masses of men and women describing the population, the market mechanism selects a feasible distribution of couples across type pairs. 
Through match-type-dependent within-household decisions such as marriage, specialisation, and role-switching this determines how the current distribution of couples translates into the next period's distribution of individual types. 
We formalise this in the next subsection.

\subsection{The Population Perspective}
\label{subs:poppersp}

Let $M_{mt}$ and $F_{ft}$ denote the masses (population proportions) of men and women of type $m,f\in\mathcal I$ in period $t$, and let $\mu_{mft}$ denote the mass matched in pair $mf$ (including the outside options). 
A feasible allocation $\mu_t$ must exhaust the masses on both sides of the market and assign zero mass to infeasible matches specified in Table \ref{tab:feasible}.

The key observation is that the individual decision problem aggregates to a population Bellman equation (see Appendix \ref{sec:generalmodel}). 
For a given population state $(M_t,F_t)$, the period allocation $\mu_t$ determines both current welfare and next period's type masses. 
From a social planner's perspective, we can write contemporaneous welfare as the sum of two deterministic terms: 
\begin{equation}
  \label{eq:bellman_t1}
r(M_t,F_t,\mu_t;d) = \sum_{m,f}\mu_{mft}\Phi_{mf}(d) + \mathcal E_\sigma(\mu_t,M_t,F_t).
\end{equation}
That is, the population-weighted sum of match-specific mean surpluses of each feasible couple type, and the aggregate utility  $\mathcal E_\sigma $ that the population derives from the additive, random preferences unobserved to the econometrician \citep{GalichonSalanie2022}. 
Rational expectations of individuals allow us to write aggregate welfare as a deterministic population-level quantity, which translates to perfect foresight from the social planner's perspective.

Transitions from current matches to next-period individual types are summarised by Markovian transition kernels, which we identify non-parametrically:
\begin{equation}\label{eq:lawofmotion}
M_{t+1}=\mathcal P^m_d(\mu_t), \qquad F_{t+1}=\mathcal P^f_d(\mu_t).
\end{equation}
Both the matching stage and the kernels depend on the divorce-cost regime $d$. This allows us to separate the effects on divorce rates and marriage rates. In the matching stage, divorce costs enter the surplus $ \Phi_{mf}(d) $ and a couple dissolves by choosing singlehood over remaining matched, so divorce is a decision of insiders, taken against the outside option. In the kernels, we allow for a regime-specific inflow of outsiders into marriage and selection of roles, which governs the effect on marriage rates. 

Marriage rates may rise when divorce is cheaper, as the lower expected contractual cost makes entry more attractive to outsiders. The same fall in cost, however, lowers the matched surplus $\Phi_{mf}(d)$ and so raises divorce rates among insiders. The married stock can therefore shrink or expand, depending on which of the two margins dominates.\footnote{The two margins are separated by design. Kernels cannot map a married individual into singlehood, because divorce is a matching decision.}
The population value function then becomes:
\begin{equation}
  \label{eq:bellman_t2}
Q_d(M_t,F_t) = \max_{\mu_t\in\Gamma(M_t,F_t)} \left\{ r(M_t,F_t,\mu_t;d) + \beta Q_d \left(\mathcal P^m_d(\mu_t),\mathcal P^f_d(\mu_t)\right) \right\}, 
\end{equation}
where $\Gamma(M_t,F_t)$ is the set of feasible couplings that allocates all individuals into feasible couple types defined in Table \ref{tab:feasible}. The maximised term is the sum of the population's contemporaneous welfare and the discounted continuation value.
This recursive problem is the planner representation of the marriage market: they choose the type-level assignment that balances current matching surplus against its effect on future type distributions.

\citet*{Corblet2023} show that, in the absence of blocking pairs \citep{Shapley1971}, the solution to this planner problem coincides with the decentralised market equilibrium. 
Appendix \ref{sec:generalmodel}, and Assumptions \ref{ass:TU}-\ref{ass:transprob} therein, formalises this equivalence using the type-based random utility structure and the additive structure of the entropy-regularised optimal transport representation of equation \eqref{eq:bellman_t1}, which is preserved in the Bellman equation \eqref{eq:bellman_t2}.
This characterisation is important for two reasons. First, it tells us that the market outcome is efficient. Second, it allows us to identify individual utilities and match surpluses by inverting the planner's problem, rather than having to go through a rationalisability argument based on decentralised individual-level choices.

This model representation of the Dutch marriage market can be empirically estimated thanks to the rich structure of our administrative data. We observe the maximiser of \eqref{eq:bellman_t2} as the whole path of equilibrium matches $ \{\mu_t\}_{t=t_0}^T $ over a long time window $t \in \{t_0, T\} $.
The surplus matrix $ \Phi $ is identified through fitting the model's implied matching allocation trajectory to the observed one.
Simulating population states $ (M_t, F_t) $ that differ from the path observed in our data allows us to calculate the counterfactual welfare trajectories.\footnote{We discuss the empirical specification and the objective function for estimation in Section \ref{sub:parest} and provide comprehensive details about computation in Appendix \ref{app:computation}.}
Further, the longitudinal nature of our individual-level records enables us to directly identify the laws of motion. This is done through the year-to-year transition kernels $ \mathcal{P}^m_d $ and $ \mathcal{P}^f_d $ in equation \eqref{eq:lawofmotion} which determine marriage, specialisation, and role switching.
Finally, through Lemma \ref{lem:hazard}, the discontinuity in divorce rates around the 2009 reform identifies the marriage surplus difference between the two divorce-regimes, denoted $ \psi $.
We discuss our data in the next section.

\section{Data}
\label{Data and Descriptive Statistics}

\subsection{Data sources and sample construction}

The empirical analysis rests on a dataset constructed from several administrative registers maintained by Statistics Netherlands (CBS). The cornerstone of the dataset is the municipal register (\textit{Gemeente Basis Administratie}), which covers all Dutch residents from 1995 onward: their date of birth, gender, immigration background, full marital history with spousal identifiers, full fertility history with child identifiers, and extensive residential history with household characteristics, within-household roles: household head, spouse, child, or unspecified, each with personal identifiers.\footnote{If there are several couples living on the same property (e.g., three-generation families or communal house-sharing arrangements), each couple is treated as a separate household unit.}

We consolidate this information into a longitudinal dataset that tracks the marriage market outcomes of the Dutch population. In line with our matching rubric (Table \ref{tab:feasible}), we distinguish three relationship types: single, cohabiting with a partner, and married.\footnote{Couples in registered partnerships are classified as cohabiters. This is motivated by their relative scarcity in the data, and by the fact that their unions were not affected by the availability of flash divorce. For details, see Appendix \ref{sec:regpar}.} To account for marital instability, single individuals are further split into those who never married and those who divorced. Leveraging deterministic linkages to the employment register, we retrieve the labour market status of all married individuals. Those working full-time are classified as specialising in market work, whereas those working part-time (or not at all) are classified as specialising in home production. This information, along with spousal identifiers, enables us to assign each individual into the correct feasible match type $mf$, and determine the equilibrium assignment. The most common match types are: married couples with male breadwinner, married couples with both spouses working, cohabiting couples, and singles. 

\begin{figure}[ht!]
  \caption{Population snapshot of relationship status and marital status by age}
    \centering
    \begin{subfigure}[b]{0.65\textwidth}
        \centering
        \caption{Households by relationship type} 
        \label{Relationship status} 
        \includegraphics[width=\textwidth]{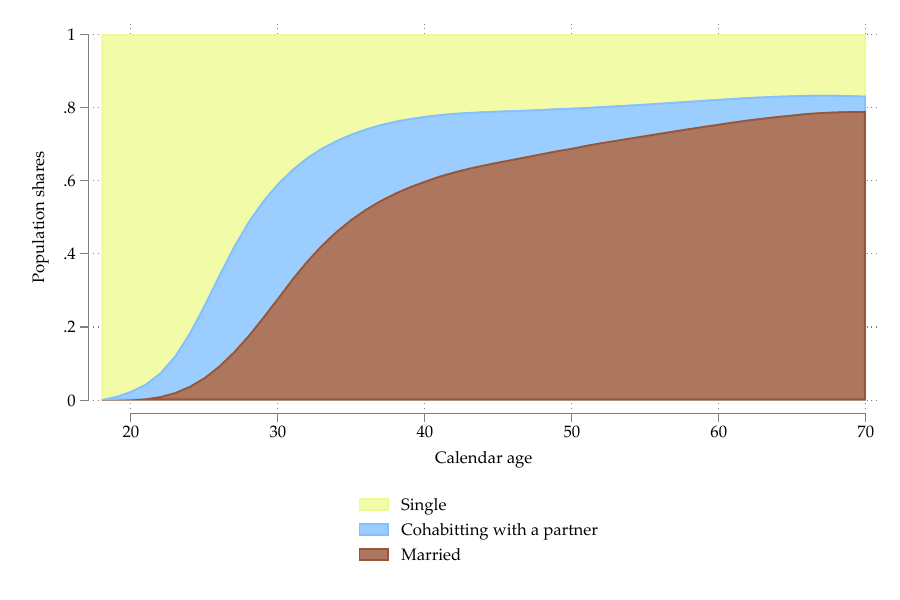} 
    \end{subfigure}
    \vspace{0.5cm} 
    \begin{subfigure}[b]{0.65\textwidth}
        \centering
        \caption{Individuals by marital status} 
        \label{Marital status} 
        \includegraphics[width=\textwidth]{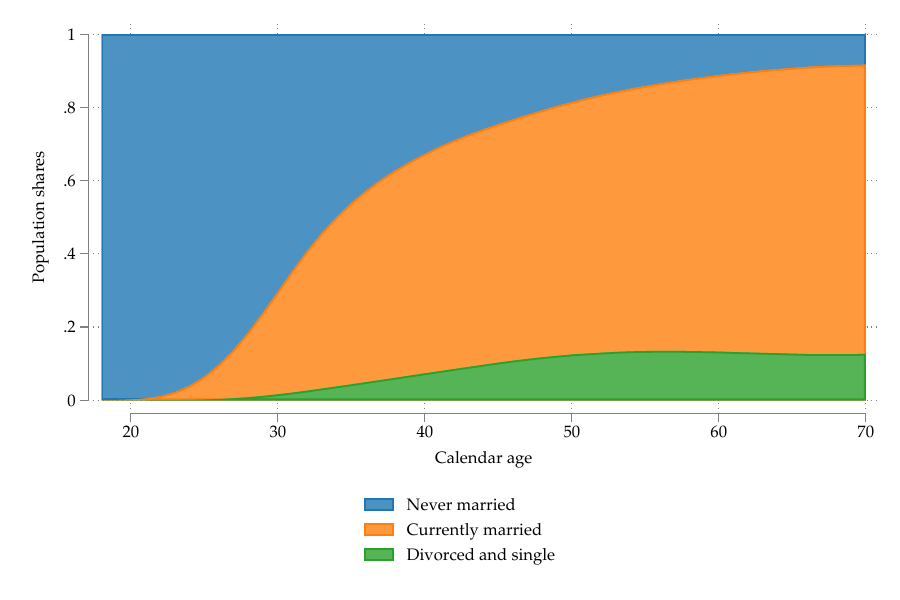}   
    \end{subfigure}
    \label{fig_dynamics}
    
\vspace{-0.8cm}
\end{figure} 

Although our model is not one of a life-cycle, it is worth taking a look at a snapshot of the population of households (Figure \ref{Relationship status}) and individuals (Figure \ref{Marital status}) at different life stages.
Most people start out in single households. By the age of 30 the majority have formed a couple, with cohabitation being the dominant mode. 
The steepest increase in marriages is in the early 30s, stabilising by 40 and growing at a constant rate from then onwards, mostly replacing cohabitation.
Roughly a fifth of the population is single at any given point.
We can also see a constant increase in the divorced population. By the age of 50, more than 10\% of individuals are unmarried divorcees.\footnote{For the sake of parsimony, this marriage market category also includes widows and widowers; their numbers in the considered age range are however negligible.}

\subsection{Sample selection criteria}

The key sample restriction is that we limit the estimation period to 2001--2020. The lower bound reflects a limitation of the employment register, which begins distinguishing part-time from full-time work only in 2001. Because we cannot determine household specialisation patterns in the years preceding the introduction of administrative divorce, we focus on the marriage-market dynamics during the period in which flash divorce was available (2001--2009) and the period following its ban, up to year 2020. The upper bound is motivated by the COVID-19 pandemic, which had a strong and idiosyncratic effect on the functioning of the Dutch marriage market.      

Additional sample restrictions are kept to the minimum. First, we do not consider the marital outcomes of same-sex couples. This is because the introduction of same-sex marriage rights coincided with the start of the administrative divorce regime. Second, we do not consider the marital outcomes of people who are institutionalised (mainly prisoners and clients of mental health facilities), since these individuals are not participating in the marriage markets. Fourth, we apply modest age restrictions. We do not consider the outcomes of people who are younger than 18, nor the outcomes of the elderly, since the data show that individuals who remain married past their 60s are unlikely to ever divorce (see Appendix Figure \ref{fig_age_dif}). 
We apply different age thresholds for men (61) and women (57), corresponding to the respective 95th quantiles of the divorcee's age distribution.

\subsection{Separation, re-partnering, and demographics}
 
The higher divorce rate during the low-cost regime (as shown in Figure \ref{fig_div_inc_NL}) is accompanied by changes on other margins.
First, Figure \ref{sep_div_days} reports the average number of days between residential separation, a proxy for the relationship breakdown, and the day of divorce.
Consistent with procedural simplicity being one of the appealing features of the administrative option, divorces were resolved more quickly during the period in which flash divorce was available.

\begin{figure}[h!]
\caption{Average number of days between residential separation and divorce }\label{sep_div_days} 
\includegraphics[trim={0 0 0 0},clip,width=\textwidth]{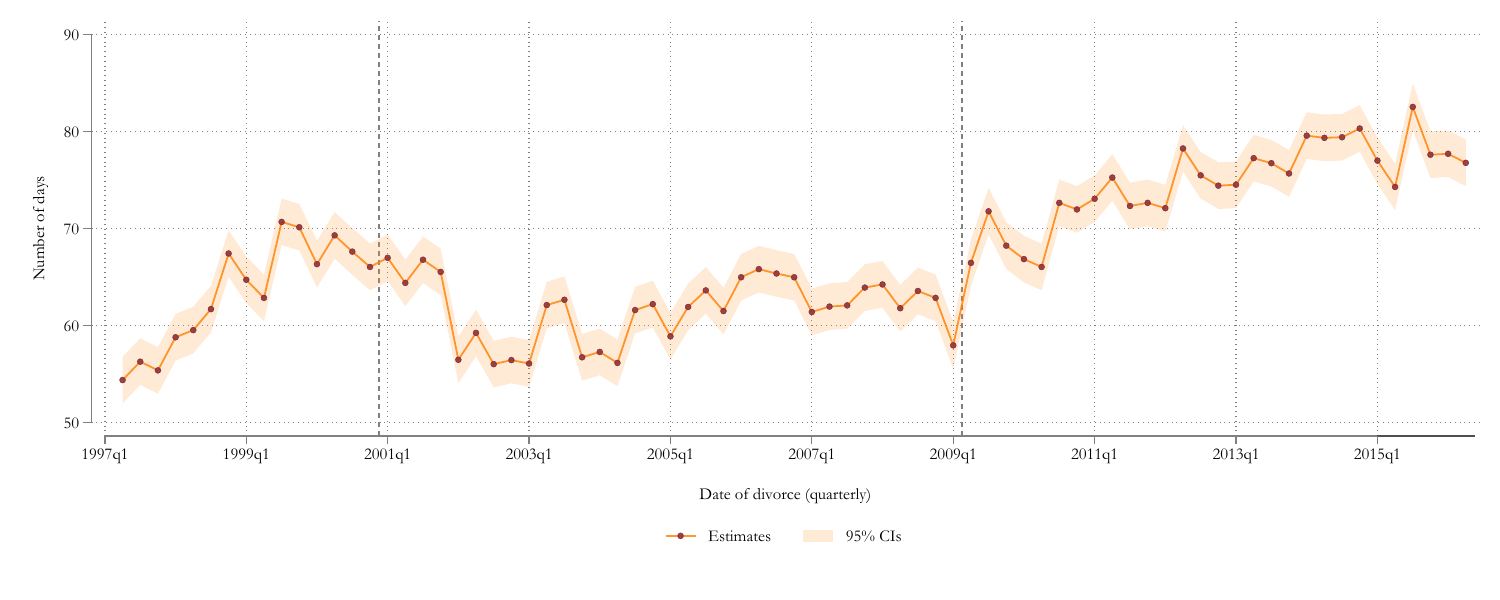} 
\vspace{-0.8cm}
\end{figure}

\begin{figure}[htbp] 
     \caption{Incidence of re-partnering among Dutch divorcees}\label{repar_inc} 
    
    \begin{subfigure}{\textwidth}
         \caption{Men}\label{repar_inc_men} 
         \includegraphics[width=\textwidth]{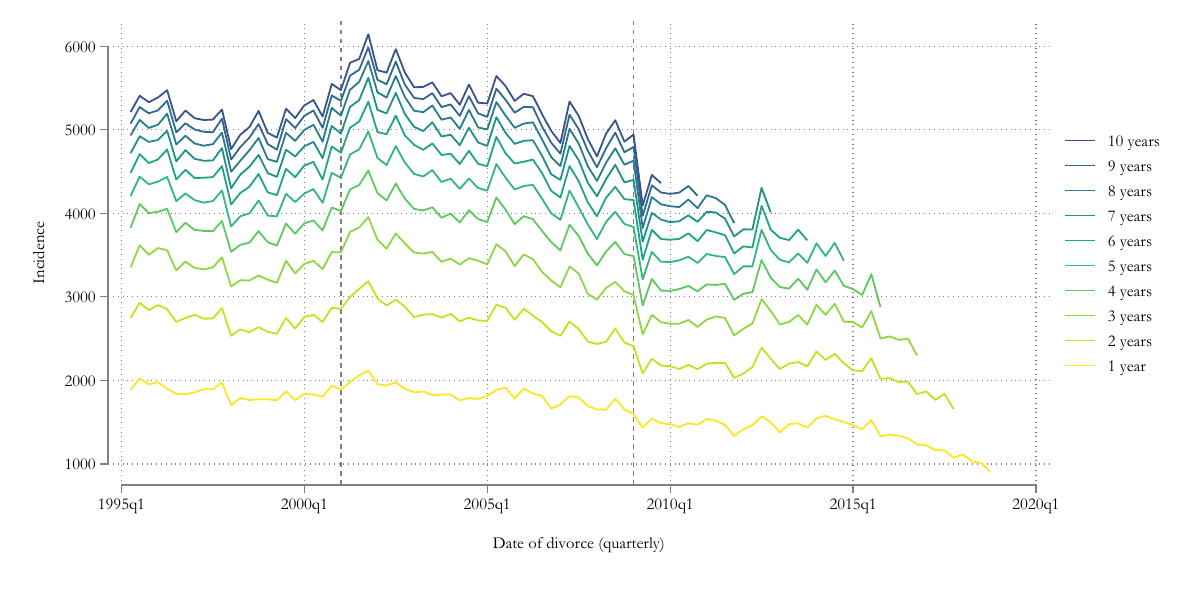}  
    \end{subfigure}
 
    \begin{subfigure}{\textwidth}
        \caption{Women}\label{repar_inc_women}  
         \includegraphics[width=\textwidth]{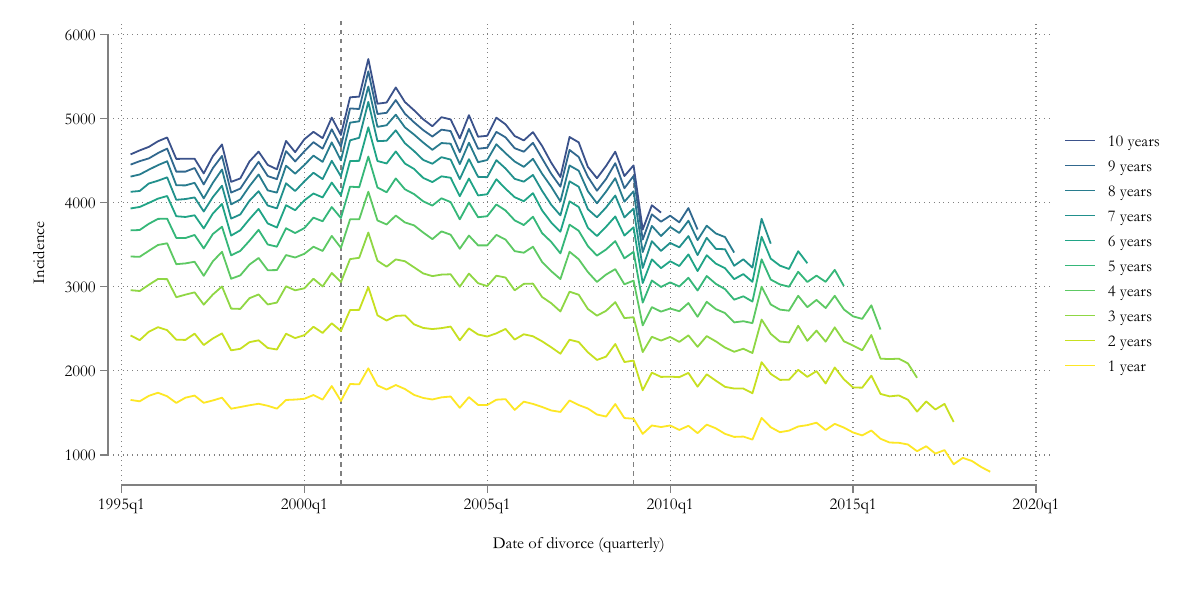}  
    \end{subfigure}
\begin{tablenotes}
Statistics Netherlands Microdata 1995-2020. Re-partnering incidences are plotted across time, quantified at 1--10 years past the date of divorce. The observability of re-partnering incidences at longer time horizons is constrained by the limits of our observation period.  
\end{tablenotes}

\end{figure}

Second, Figures \ref{repar_inc_men} and \ref{repar_inc_women} show the incidences of re-partnering among men and women who divorced within observation window. To ensure comparability across time, we plot the incidences at different time horizons (from 1 to 10 years past the date of divorce). We see that the flash divorce period was characterised by a higher incidence of re-partnering, which is consistent with the higher observed inflows of divorcees back into the marriage market. Remarkably, the trajectories for longer time horizons show that the incidence of re-partnering was increasing even among men and women who divorced prior to 2001. This suggests that earlier divorce cohorts also benefited from being able to encounter more potential partners during the flash-divorce period. Another notable observation is that the long-term incidence of re-partnering falls over the flash divorce period, and plateaus after the 2009 ban. This aligns with the fact that, because of the ban, the later cohorts of divorcees were interacting with a gradually shrinking pool of potential partners. Past the 2009 ban, the inflow of new divorcees reverted to the lower level, and the size of the pool stabilised. 

These narratives capture a central feature of the policy variation we study: although the administrative option increases divorce incidence immediately, its effects on broader marriage markets unfold over much longer horizons. The equilibrium trajectory adjusts to the altered flows of marriage market participants only gradually, which renders their effects on the match type distribution highly dynamic. To pin down the total effects of candidate policy regimes, we therefore focus on long-run counterfactual predictions that fully internalise these slower adjustments.  

In addition to the visualisation exercises above, we conduct an auxiliary logistic regression analysis to examine the characteristics of couples who opt for the administrative divorce pathway. Using the population of couples who divorced during the period of flash divorce availability, the regression model associates its take-up with a rich set of socio-economic controls (see Appendix Table \ref{tab:flashreg}). The take-up is found to be higher among native couples with similar earnings and education levels, and lower among sole-breadwinner households, immigrant couples, and couples with large income differentials.
The take-up is also decreasing with marriage duration: longer marriages, presumably involving more assets and children, are less likely to use the administrative option.
Together, we interpret this as evidence that administrative divorce is most practical for couples whose separations do not involve complex alimony or property-division negotiations.

\section{Results}
\label{Results}
\subsection{Divorce Hazards}

We begin our empirical analysis by estimating the values of $\psi_{mf}$ for each currently-married couple type  $mf\in\{\sf CH,\sf CW\}\times\{\sf CH,\sf CW\}$.
As discussed in Section \ref{theoryindiv}, these are the type-specific declines in the marital surplus for insiders caused by moving from the high-cost to the low-cost divorce regime.
To identify this, we exploit the structure of our model. Due to the extreme value assumption on the compatibility parameters, the matching problem has a two-sided logit structure.
This identifies the mean surplus $\Phi_{mf} $ for each currently-married couple type, and the scales $ \sigma_m$ and $\sigma_f $ of the compatibility parameters' distributions, up to a normalisation.
Let $\lambda_{mf}(d)$ be the per-period divorce hazard under divorce-cost  regime  $ d $, \textit{i.e.}, the probability that the match dissolves between periods $t$ and $t+1$ conditional on being in state $mf$ at $t$. 
Under our random-utility specification, changes in divorce costs translate into shifts in these hazards.
In Lemma \ref{lem:hazard} in Appendix \ref{app:hazard}, we show that the low-cost regime lowers the net surplus from remaining married for type $mf$ approximately by an additive log-hazard ratio:
\begin{equation}
  \psi_{mf} \approx \bar{\sigma}_{mf}  \log \left(\frac{\lambda_{1,mf}}{\lambda_{0,mf}}\right),
\end{equation}
where $ \bar{\sigma}_{mf} = \sqrt{\sigma_{m}^2 + \sigma_{f}^2} $ is the marriage-type-specific scale parameter.

We calculate these hazard rates for each couple type based on a regression discontinuity design (RDD). Centering around the event date of the ban of the flash divorce in 2009, we estimate the drop of the type-specific divorce rate at the cut-off.\footnote{In principle, a simplified version of our model (that abstracts from household specialisation choices) could leverage the introduction of the flash divorce in 2001 for a test of over-identifying restrictions and identification of asymmetric effects.}
To illustrate the design and its validity, Figure \ref{RD} reports an unconditional RDD chart for all married couples (pooling across specialisation types), and compares it to the RDD chart for cohabiting couples.  
The first chart shows that the divorce rate experienced a significant drop (11.5\%) at the point when the flash divorce option was banned, whereas the separation rate of cohabiting couples remained identical unaffected by the ban. This confirms our prior that the higher incidence of divorces during the flash divorce period was not coincidental.
We note that the hazards imply an overall utility decline of marriage of $ \psi_{\mathsf{CC}} \approx 0.05 $.
We repeat this exercise to obtain $ \lambda_{mf}(d) $ for each married type $ mf $ and calculate the corresponding surplus differences $ \psi_{mf} $.

\begin{figure} [htbp]
\caption{Regression Discontinuity analysis of the flash divorce ban}
    \label{RD}
    \centering
    \begin{subfigure}{0.48\linewidth}
        \centering
        \caption{Divorce rate of married couples}
        \includegraphics[width=\linewidth]{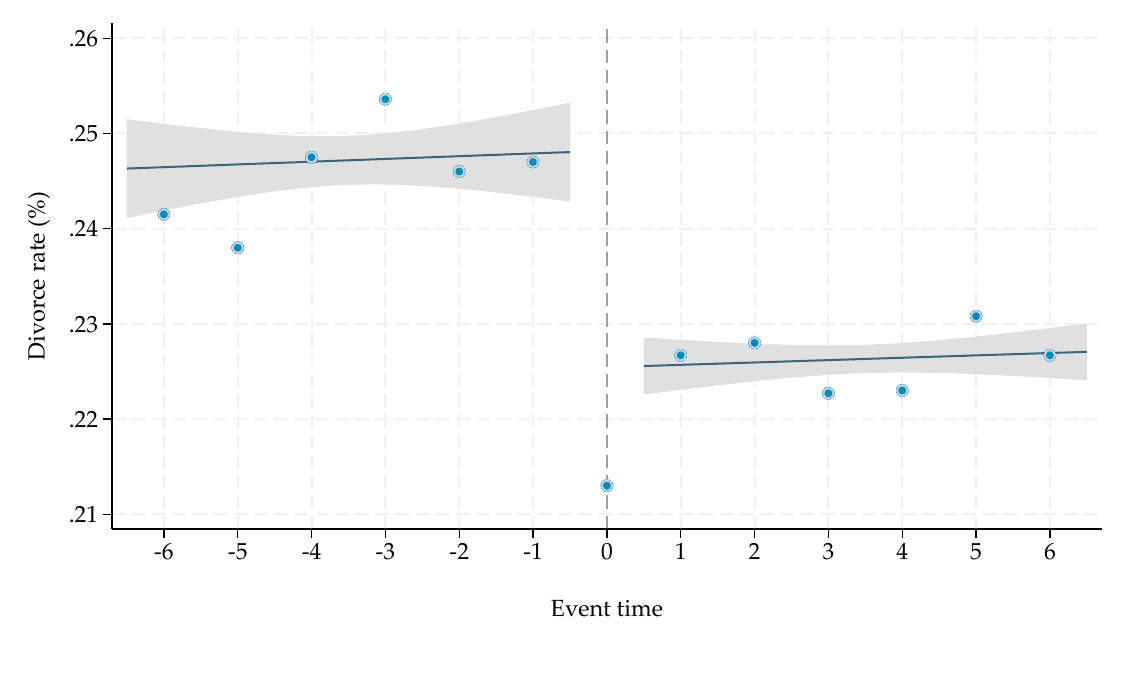}
    \end{subfigure}
    \hfill
    \begin{subfigure}{0.48\linewidth}
        \centering
        \caption{Separation rate of cohabiting couples}
        \includegraphics[width=\linewidth]{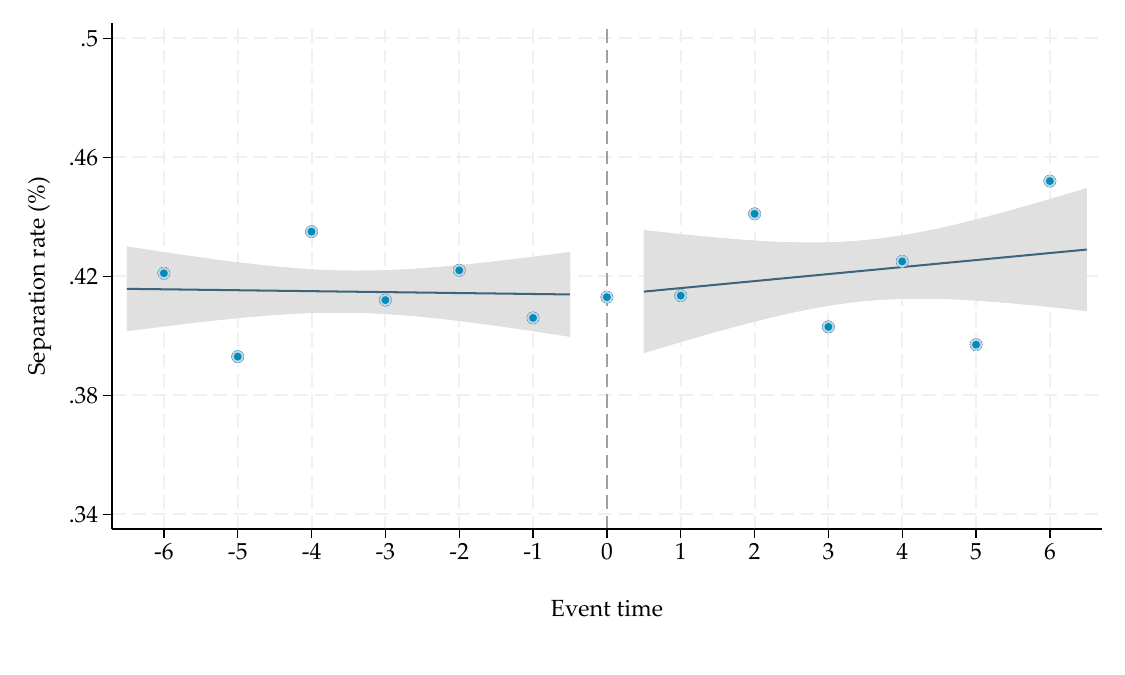}
    \end{subfigure}
    \begin{tablenotes}
      Donut regression discontinuity design centered around 2009 ban of the flash divorce option. The shaded areas represent 95\% confidence intervals around the estimates.
    \end{tablenotes}
    
\end{figure}

\subsection{Parameter Estimation}\label{sub:parest}
Our model captures two margins through which the administrative divorce option may influence marriage.
The first operates through match formation and dissolution on the matching market.
The second operates within the household: marriage and specialisation. These two margins motivate the set of match types in our model, with the matrix of all feasible matches (Table \ref{tab:feasible}) being the planner's decision space.
The state-space is the vector of male and female population shares (masses) corresponding to each individual type $ m, f \in \mathcal{I} \equiv \{\sf{U}, \sf{CH}, \sf{CW} \} $. 

To characterise population welfare, we collect the parameters which fully saturate all cells in the surplus matrix corresponding to feasible matches, and stack them into a vector $ \theta $ together with the vector of scales.
This parameterisation of equations \eqref{eq:bellman_t1} and \eqref{eq:bellman_t2} 
affects mean utilities through  $ {\Phi}_{\theta} $ and the dispersion of tastes and compatibilities through the generalised entropy $ \mathcal{E}_\theta $.

The evolution of types is governed by a transition kernel which we estimate non-parametrically.
Types change because couples make collective decisions: they marry, specialise, or switch roles.
Further, in each period there is exogenous entry and exit due to cohort flows (birth and death) and migration.
We summarise overall type transitions by the kernels $ \mathcal{P}_d^m $ and $ \mathcal{P}_d^f $ for $ d \in \{ 0, 1\} $ governing the law-of-motion in equation \eqref{eq:lawofmotion} within and outside the low administrative divorce cost regime, respectively.
We obtain their estimates by time-averaging year-to-year transition kernels which are directly obtained from the population data.\footnote{An alternative approach would be to take the exact transition kernels for every year, or any function thereof, such as a moving average. However, because we want to calculate long-term counterfactual forecasts, we restrict ourselves to constant transition kernels within each regime.}

This is an inverse structural problem: we observe a trajectory of equilibrium distribution of matches $ \{ \hat{\mu}_t \}_{t=t_0}^T $ and seek to recover the parameter vector $\theta$ that rationalises it. For any fixed $\theta$, the parameterised surplus matrix $\Phi_\theta$ and entropy term $\mathcal E_\theta$, together with the estimated transition kernels, pin down a dynamic equilibrium through the Bellman equation and the law of motion. Solving the model therefore yields the trajectory of equilibrium coupling matrix $ \{ \mu_t^*(\theta) \}_{t=t_0}^T$ implied by that candidate parameter.
Because the state space is high-dimensional, with six continuous state variables capturing the masses of individual types for both sides of the matching market, we show how to approximate the Bellman value function and the corresponding optimal policy rule with a neural network \citep{maliar2021} while imposing the equilibrium constraints. 
We achieve this by introducing an explicit \citet{sinkhorn1964} scaling layer in the neural network (see Appendix \ref{app:computation}).
We then estimate $\theta$ by maximum simulated likelihood, implemented as the minimisation of the Kullback--Leibler distance between observed and model-implied conditional matching distributions. In particular, we let $\mu_{mf|m,t} \equiv {\frac{{\mu}_{mft}}{M_t}}$ and $\mu_{mf|f,t} \equiv \frac{\mu_{mft}}{F_t} $ and compare $\hat{\mu}_{mf|m,t} $ and $\hat{\mu}_{mf|f,t} $ to their model-implied counterparts, which makes the criterion less sensitive to estimation error in the transition kernels governing the marginal type distributions.
We regularise the estimation by adding a penalty term to the criterion, which punishes deviations from the structural model \eqref{eq:bellman_t1}-\eqref{eq:bellman_t2} and thus prevents over-fitting to the observed matching distribution.
We discuss estimation in detail in Appendix \ref{app:estimation}.

\subsection{Results}
With the model set up and estimated, we turn to the results. We first discuss the estimates of surpluses for each feasible match. %
We start from a simplified model that only distinguishes between two types of men and women: Unmarried and Married.\footnote{All model specifications use a discount factor of $ \beta = 0.9 $, equal weighting $ w_1 = w_2 = 1.0 $ of the Bellman residual and the first order conditions, and $ \alpha = 2.0 $ for the regularisation penalty. Our neural nets have two hidden layers with $128$ and $ 64 $ nodes, respectively. Entropies of the undefined cells are excluded from contemporaneous surplus.}
The match of two unmarried individuals constitutes a cohabiting relationship, the match of two married individuals constitutes a marriage, and the outside option is being unmarried and single. Table \ref{Table_2x2} lists the surpluses associated with the two relationship types, showing that couples derive greater utility surplus from being married (3.31) than cohabiting (-1.12). 
Since the latter is often a transitory state, we explain the negative contemporaneous utility of being in a cohabiting relationship by a high enough expectation about the continuation value of marriage in the future.
The off-diagonal elements are undefined, because an (un)married man is, by definition matched with an (un)married woman, which makes this equivalent to a pair of two-sided aggregate logit choice models with the transition kernel moving masses between them. 

The surplus associated with marriage incorporates a utility decline of $ \psi_{\mathsf{CC}} \approx 0.05 $ when $ d = 1 $.
This means that the option of cheap exit reduces the value of marriage for insiders, lowering its surplus even for married couples who never divorce.

\begin{table}[H] 
\centering
\caption{Surplus Matrix $\Phi(d)$: Minimal Specification}
\begin{tabular}{@{}c lcc@{}}
\label{Table_2x2}
 &  & \multicolumn{2}{c}{Women} \\ 
\cline{3-4}
 &  & unmarried & married \\
\cline{2-4}
\multirow{2}{*}{\rotatebox[origin=c]{90}{Men}} 
 & unmarried & $-1.12$ & $\cdot$ \\[1ex]
 & married & $\cdot$ & $3.31 - 0.05 d$ \\[1ex]
\cline{2-4}
\end{tabular}
\begin{tablenotes}
  The variable $ d = 1 $ indicates the low-cost regime, the coefficient on $ d $ is the estimated surplus decline $ \psi_{mf} $. 
Outside options of singlehood are normalised to zero and not shown in the table.
\end{tablenotes}
\end{table}

To investigate the consequences of higher divorce costs on household specialisation, we now distinguish married individuals engaged in market work ($W$) and those engaged in home production ($H$). The surplus matrix corresponding to our principal model specification is presented in Table \ref{Table_3x3}. 

\begin{table}[H]
\centering 
\caption{Surplus Matrix $\Phi(d)$: Baseline Specification}
\begin{tabular}{@{}c lccc@{}} 
\label{Table_3x3}
 &  & \multicolumn{3}{c}{Women} \\
 \cline{3-5}
 &  & unmarried & married, home prod. & married, work \\
\cline{2-5}
\multirow{3}{*}{\rotatebox[origin=c]{90}{Men}} 
 & unmarried & $-1.13$ & $\cdot$ & $\cdot$ \\[1ex]
 & married, home prod. & $\cdot$ & $1.64 - 0.12 d$ & $0.55 - 0.00 d$ \\[1ex]
 & married, work & $\cdot$ & $1.47 - 0.00 d$ & $1.90 - 0.06 d$ \\[1ex]
\cline{2-5}
\end{tabular}
\begin{tablenotes}
  The variable $ d = 1 $ indicates the low-cost regime, the coefficient on $ d $ is the estimated surplus decline $ \psi_{mf} $. 
Scales: $ \log\sigma = (0, -1.27, -1.36)$, loss: $ D_{\text{KL}} = 0.21 $. 
\end{tablenotes}
\end{table}

As before, individuals derive greater utility surplus from being married than from being unmarried. The utility surpluses differ among the four housework and market work configurations, being the highest for married couples in which both spouses are focused on market work (1.90), and lowest for married couples in which the woman is engaged in market work and the man in housework (0.55). 
These marriages seem to be the most unstable, leading to the highest number of divorces.
We further see that the low divorce cost regime decreases the utility surplus associated with marriages in which both spouses focus either on market work (-0.06) or housework (-0.12). 
In contrast, the low divorce cost regime does not lower the utility surplus of single-earner couples (-0.00). This apparent stability is in accordance with economic intuition: single-earner couples are unlikely to take advantage of administrative divorce, as their separations are complicated by alimony and property division negotiations, which would be difficult to settle and enforce out of court.\footnote{Appendix Table \ref{Table_3x3t} reports structural estimates with time-trends in surpluses; the results are qualitatively similar to the baseline model; there is a positive trend in the surplus for cohabitation, consistent with the increasing popularity of this type of relationship.}

\subsection{Counterfactuals}
To contextualise the practical importance of the administrative divorce option for the functioning of Dutch marriage markets, we conduct counterfactual simulations. Using our principal model specification (Table \ref{Table_3x3}), we study the marriage market dynamics under two scenarios: a scenario that involves the 2009 ban of the administrative divorce option (reverting back to the high-cost regime), and a scenario in which the administrative option remains available (retaining the low-cost regime). 

Figure \ref{Fig_CF_joint_lt} shows the outcomes of these simulations, plotting the model-implied trajectories of the population shares of match-types $ \mu_{mft} $ under the two scenarios. The red lines denote the population shares of match types that correspond to the high cost scenario, whereas the green lines correspond to the low-cost scenario. The dashed segments of the two lines denote out-of-sample predictions that cover a 20-year period (up to year 2040). As discussed in Section \ref{subs:poppersp} and equation \eqref{eq:bellman_t2} therein, market clearing requires the entire mass of men $ M_t $ and women $ F_t $ to be allocated into feasible couple types $ \Gamma(M_t, F_t) $. This means that the rows of $ \mu_t $, with schematic according to Table \ref{tab:feasible}, must add up to the proportions of individual types $ M_t $ for men, and the columns add up to the corresponding vector $ F_t $ for women. Their components add up to the population shares of men and women. Thus, when interpreting the graph, we must keep in mind that the population shares add up to one \emph{only} when weighted by the number of individuals in the respective household type.%

\begin{figure}[ht!]
  \caption{Counterfactuals: Joint Distribution of Match Types}
\includegraphics[width=0.95\textwidth]{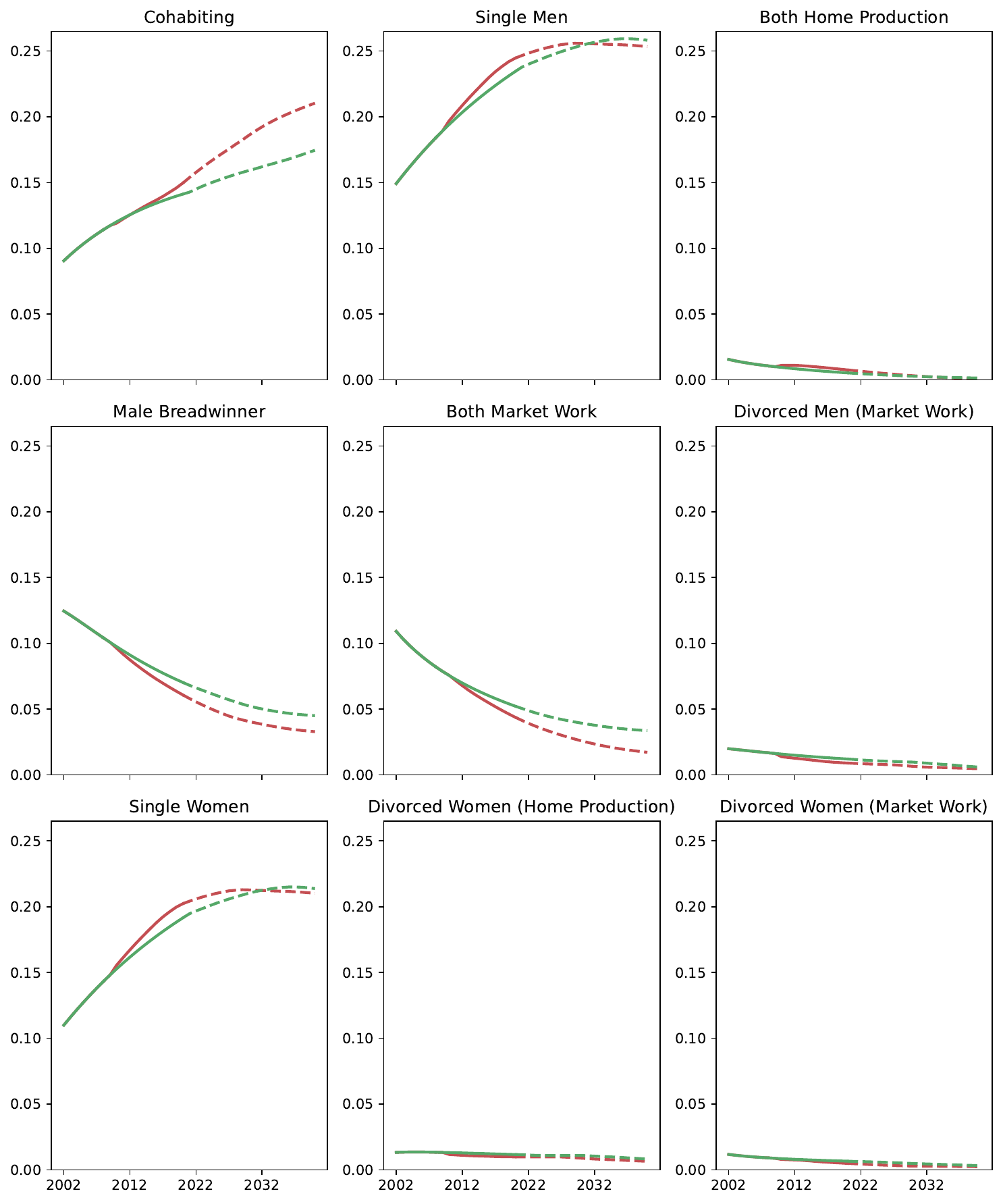}
  \label{Fig_CF_joint_lt}
\begin{tablenotes}
  Predicted population shares of joint match types under the two counterfactual scenarios. The red line corresponds to the 2009 administrative divorce ban scenario, whereas the green line corresponds to the scenario in which administrative divorce remains available. Dashed lines correspond to predictions of marriage market dynamics beyond the span of our data. Match types CH,CW and CH,0 are not featured in the plot because their population shares are close to zero. 
\end{tablenotes}
\end{figure}
The plotted trajectories of match types are characterised by time trends that reflect changing attitudes to marriage, cohabitation, and the division of tasks in the household. Specifically, the share of cohabiting couples is steadily growing (along with the shares of single men and women), whereas the shares of married couples are falling. This is well-aligned with contemporary scholarship, which emphasises the gradual de-institutionalisation of marriage \citep{Cherlin2004,Stevenson2007}. 

As illustrated by Figure \ref{repar_inc}, the reform changes the marriage market composition only gradually, reflecting the differential flow rates of marriage market participants under the two divorce cost regimes. This puts the market in a transient state for multiple years past the date of the reform. By comparing the endpoints of the two trajectories, we can evaluate the long-run effects of the high- and low-cost regimes. 

Starting with cohabiting couples, their population share increases by 3.6 p.p. (20.5\%) compared to the low-cost counterfactual. This is compensated by a fall in the population share of married couples. The largest decline is experienced by married couples in which both spouses are engaged in market work, whose share falls by 1.6 p.p. (49.5\%). This is in line with our expectations, since the spouses who are both generating income are likely to derive the most utility from the low-cost administrative divorce option. In its absence, it can be expected that many such couples would opt for cohabitation instead of marriage. We also see a 1.2 p.p. (27.1\%) decline in the share of married couples with a male breadwinner. This indicates that the high-cost regime does not encourage more couples to specialise. The decline is lower than the one observed among couples engaged in market work. We attribute this to their need to settle alimony and custody through the court, which may discourage them from taking advantage of the administrative divorce option. 

The other match types (singles, divorcees, and married couples engaged in home production) do not exhibit substantive differences in the population shares attained under the two counterfactual scenarios. The single men and women are subject to distinct dynamic adjustments, with the share of singles being initially greater under the high-cost scenario, but eventually becoming slightly lower than the share of singles attained under the low-cost scenario. This non-monotonic adjustment path illustrates the value of the long-term exposition of the counterfactual marriage market dynamics. 

\begin{figure}[h!]
  \caption{Counterfactuals: Distributions of Individual Types}
\includegraphics[width=0.95\textwidth]{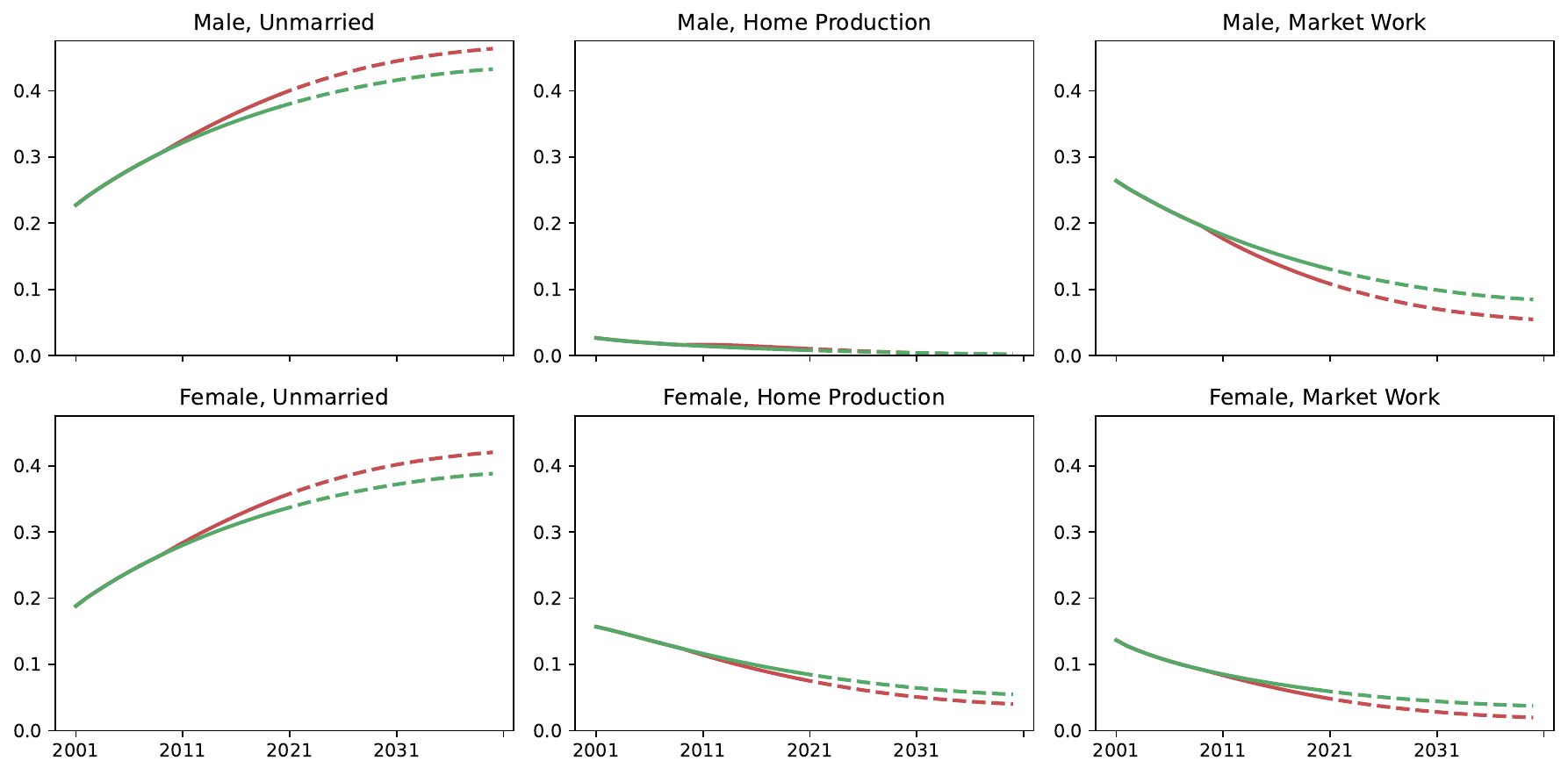}
  \label{Fig_CF_unc}
  \begin{tablenotes}
    The red line corresponds to the 2009 administrative divorce ban scenario, whereas the green line corresponds to the scenario in which administrative divorce remains available. Dashed lines correspond to predictions of marriage market dynamics beyond the span of our data.
  \end{tablenotes}
\end{figure}

Figure \ref{Fig_CF_unc} documents the evolution of individual types under the two scenarios. Under the low-cost scenario, men and women are $55.6\%$ (3.1 p.p.) and $53.7\%$ (3.2 p.p.) more likely to be married than under the high-cost scenario.
This underscores that lower divorce costs, mediated through the administrative divorce option, can indeed have a substantive dampening effect on the overall decline of marriage rates.

How does this translate into specialisation?
Of the 3.1 p.p. higher number of married women in the low cost regime, about half (1.5 p.p.) specialise in home production by the end of our prediction window, a 36\% increase compared to the high cost regime.
Almost all of the extra married men take on the role of market work.

Finally, we quantify the overall social welfare surplus associated with the two scenarios. Figure \ref{Fig_CF_V} plots the per-period welfare under the two scenarios. Under the low-cost scenario, the welfare attained at the end of our prediction period is 17\% higher than under the high-cost scenario.
The welfare increase is aligned with the fact that the high-cost scenario imposes additional constraints on couples who wish to divorce, and deters unmarried couples from formalising their unions. In addition, since removing these frictions is not associated with a decline in specialisation (and the associated positive externalities of marriage), the welfare corresponding to the low-cost scenario is bound to exceed that of the high-cost scenario. 

\begin{figure}[ht!]
  \caption{Counterfactuals: Period Welfare}
\includegraphics[width=0.95\textwidth]{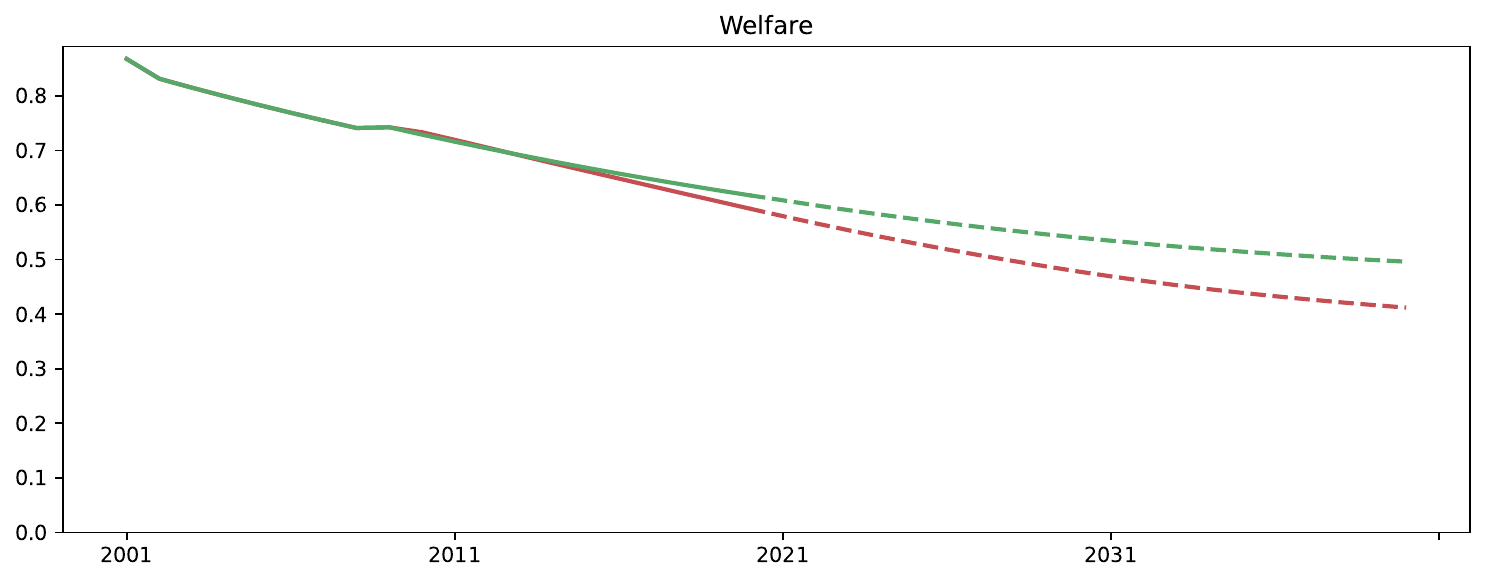}
  \label{Fig_CF_V} 
  \begin{tablenotes}
    The red line corresponds to the 2009 administrative divorce ban scenario, whereas the green line corresponds to the scenario in which administrative divorce remains available. Dashed lines correspond to predictions of marriage market dynamics beyond the span of our data.
  \end{tablenotes}
\end{figure}

\subsection{Discussion}

The parameter estimates and counterfactuals produced by our model help us understand the fine margins through which the administrative divorce option affects the marriage market and its dynamics. The surplus matrices shown in Tables \ref{Table_2x2} and \ref{Table_3x3} indicate that the utility surplus of marriage falls under the low cost scenario ($d=1$). This is to be expected, since the utility surplus of marriage denotes the intrinsic value of sustaining marriage for people who are already married (the value for the insiders). Because of the lower divorce costs, the intrinsic value of sustaining marriage falls, which is reflected in a higher divorce rate under the low-cost scenario.

By contrast, the counterfactual simulations in Figures \ref{Fig_CF_joint_lt} and \ref{Fig_CF_unc} show that the low-cost scenario is also characterised by higher marriage rates. On the surface, this finding may seem inconsistent with the aforementioned decline of the utility surplus of marriage. However, we have to remember that the shares of married couples (along the other couple types) are governed by the transition matrices. This means that the shares are heavily influenced by the behavior of marriage market entrants (and re-entrants). If marriage, as an institution, becomes sufficiently more attractive to the unmarried couples (meaning that its value for the outsiders rises), the model will predict the low-cost scenario to be characterised by both higher marriage rates accross all types and higher divorce rates conditional on being married. 

The specialisation margin of our model uncovers an important source of heterogeneity: the fall of the intrinsic value of marriage for insiders is concentrated among two-earner couples. The male-breadwinner couples sustain the same value of marriage irrespective of the divorce-cost regime, which we attribute to the complexity of their alimony and custody arrangements. With regard to the effects on outsiders, we see greater inflows into both marriage types. The inflow effect is stronger for two-earner couples, which is in line with the fact that they are the main benefactors of the administrative divorce option. The smaller positive effect on male-breadwinner couples likely reflects the fact that they can always transition into the two-earner type and thus become more likely to exercise the administrative divorce option.    

In terms of broader marriage market trends, we show that the administrative divorce option has a substantive dampening effect on the negative trend that characterizes the incidence of marriage in the Netherlands. Nevertheless, the policy alone does not prove to be sufficient to reverse the negative trend. This is perhaps not too surprising: while the implied reduction of divorce costs is certainly meaningful, it is only relevant for the couples whose divorces do not require complex settlement negotiations. Furthermore, irrespective of the divorce regime, many marriages involve non-trivial entry costs associated with wedding celebrations. These costs are shaped largely by cultural factors and are therefore less amenable to public policy interventions. The costs of marriage entry, along with the lack of access to administrative divorce, are the likely reason why the Netherlands is currently experiencing a rise in the popularity of registered partnerships. These types of unions enable couples to formalize their relationships without incurring much of the marriage costs, and therefore constitute a viable alternative pathway for the modern marriage markets.  

\section{Conclusion}
\label{Conclusions}

We exploited an unintended, unanticipated divorce reform to analyse the effects of divorce costs reduction on the marriage rates and broader marriage market dynamics. Using a dynamic structural model of matching decisions, we showed that the marriage rates are higher under the low-cost regime. This finding reflects the fact that the low-cost regime induces more marriage formations than dissolutions. Aggregate welfare also improves, reflecting the removal of financial frictions for people who want to divorce, and retention of household specialisation due to higher marriage rates.

An interesting direction for future research is to extend our model by incorporating having children as a decision variable.
Doing so in standard stationary models is difficult, because fertility is confined to an age-bound window. This makes the childbearing decision inherently age-dependent, and therefore unfit to be represented by a time-invariant policy rule.
Our approach already departs from a steady-state setting, providing a natural foundation to study policy questions with age-specific decisions and outcomes. For example, one could use this framework to study the welfare effects of public investments in childcare systems (either through expansion or subsidisation), taking into account the implied reductions in the costs of children and their effects on fertility, match formation, and marital stability.

\bibliography{submission}

\appendix

\section{Appendix: Institutional Details}\label{sec:institutions}

\subsection{Marriage in the Netherlands}
The commencement of marriage in the Netherlands requires a civil wedding procedure, which is administered by the local municipality. The costs of this procedure depend on a variety of factors, including the day of the week and the type of ceremony. The basic ceremony is very affordable: over the period of observation, its cost rose from approximately \euro 100 to \euro 200, in line with inflation. The costs of more involved wedding ceremonies depend on a variety of factors.\footnote{According to online resources, the wedding celebration costs in the Netherlands averaged \euro 10,000 to \euro 15,000 over the period of observation.} Marriage affords couples a variety of benefits and protections, including tax partnership status, mutual pension rights, a joint property regime, inheritance protection, automatic conferral of parental responsibility, and overall administrative simplicity. During the period of observation, the default matrimonial property regime was full community of property. In 2018, this default was replaced by a limited community of property regime.

The Netherlands uses a no-fault divorce regime, which was introduced in 1971. Divorcing couples are required to seek legal representation, and the divorce procedure is initiated by one of the lawyers filing a petition with the court. There is no mandatory separation period, which means that the petition may be filed irrespective of the couple’s current living arrangements. The costs of divorce are highly idiosyncratic, depending on the amount of time spent with lawyers, the number of court hearings, and other factors. Legal services for low-income families are heavily subsidised, which means that disadvantaged couples can avoid a large share of the financial costs.

\subsection{De facto couples and registered partnerships}
\label{sec:regpar}

Couples who are living out of wedlock (in de-facto cohabitation arrangements) have considerably lower levels of legal protection than married couples. The general rule is that the benefits conferred automatically upon marriage have to be negotiated, contracted, and notarised by de facto couples in order to be legally binding. Compared to countries where de facto couples become legally recognised once they cross a specified relationship-duration threshold (such as Australia, Canada, or New Zealand), the Netherlands does not maintain a similar recognition rule. This means that, regardless of relationship duration, partners in de facto couples do not have an automatic legal right to alimony payments or property division upon relationship breakdown.

In 1998, the Dutch government introduced a new type of civil union: registered partnership. Registered partnerships were targeted at same-sex couples, allowing them to enter a formally recognised civil union while keeping marriage reserved for heterosexual couples only. Different-sex couples were allowed to enter registered partnerships as well, but their interest in partnerships proved relatively low.\footnote{Take-up increased over the period of observation; however, the vast majority of different-sex registered partnerships served as transitory civil unions that were later converted into marriage.} The legal benefits and protections of registered partnership were less comprehensive than those from marriage, but it allowed filing for administrative divorce.

The administrative divorce procedure for registered partners consisted of three steps. First, the couple was required to draft a mutually agreeable divorce settlement. Second, the settlement had to be signed by both parties and notarised. Third, the notarised settlement had to be lodged with the municipality. Administrative divorces were strictly uncontested, which meant that the settlement terms were not challenged by municipal officers, and the divorce became effective immediately upon receipt of the settlement contract. If the registered partners were unwilling or unable to draft a mutually agreeable settlement, they could resort to a conventional divorce procedure resembling the procedure for married couples.

\subsection{Marriage equality reform and its aftermath}

Access to administrative divorce remained exclusive to registered partners for three years. This changed in 2001, when the Dutch government passed a bill legalising same-sex marriage. The bill came into effect on April 1, 2001, and had several consequences for marriages and registered partnerships in the Netherlands. Besides allowing same-sex couples to marry, the bill also legislated the process of transitioning from registered partnership to marriage. Specifically, it allowed registered partners to transform their unions into marriages by means of a simple administrative procedure, under which couples were required to lodge a request with the municipality and pay a small fee. This procedure was quickly exercised by many same-sex couples who preferred to be recognised as married.

However, the bill also allowed reverse transitions. To maintain legal symmetry, married couples were allowed to transition into registered partnerships, and they could do so using the same administrative procedure. This feature of the bill proved to have substantial unintended consequences. Unforeseen by policymakers, reverse transitions quickly became popular, with many (mostly different-sex) couples transforming their marriages into registered partnerships. 

\subsection{Flash divorce}

The reverse transition enabled married couples to switch to the legal regime applicable to registered partners, which included the right to dissolve the partnership by means of administrative divorce. There were no other substantive benefits associated with being in a registered partnership, and nearly all couples who made the reverse transition immediately followed through with their divorce intentions.

Compared to conventional divorce, flash divorce was considerably cheaper, as discussed in the main body of the text, and faster to process. According to legal practitioners, conventional divorces take approximately three months to settle. This reflects the fact that the courts require a minimum of six to eight weeks to process submitted divorce petitions, and the submission itself is preceded by a period during which the spouses meet with their legal teams and prepare for submission. In contrast, flash divorce was not burdened by any statutory processing times, which meant that the couple’s decision could become effective immediately upon notarising the divorce settlement. The administrative burden was therefore greatly reduced, and couples could be divorced in a fraction of the time required under the conventional procedure.

Flash divorce was also more accessible for couples living outside regional centres. Unlike conventional divorces, which are processed by a small network of 19 regional courts, flash divorces were processed by local municipal offices. There are 390 municipalities in the Netherlands. By opting for a flash divorce, couples could save the costs and time associated with commuting to courts and law firms. Given Dutch geography, these considerations were inessential for most couples, but they could prove substantive for couples living in rural areas or on the coastal islands.

\subsection{The ban on the flash divorce option}

Despite its popularity, the institution of flash divorce was contentious. As an unintended outcome of the same-sex marriage bill, flash divorce was poorly incorporated into existing family legislation. It created legal ambiguities, and legislative experts were concerned about its inter-jurisdictional implications (see \citealt{Aygun2015}). Specifically, the flash divorce option was not embedded in international treaties, and couples could face problems when trying to have the flash divorce recognised in their countries of origin. For these reasons, the government passed a corrective bill that prohibited further transitions from marriage into registered partnership, effectively banning the flash divorce option. The bill was passed on November 28, 2008, and came into effect three months later, on March 1, 2009.

\section{Appendix: Modelling Details}
\subsection{General Model}\label{sec:generalmodel}

This appendix section provides the formal microfoundation for the recursive population problem in the main text. 
The main text works with a parsimonious symmetric type space $\mathcal I=\{\mathsf U,\mathsf{CH},\mathsf{CW}\}$ and regime-specific type-level surpluses. 

Here we state the more general two-sided random utility model, derive the entropy-regularised planner representation, and show how the transition operators map current couplings into next period type masses.
To capture heterogeneity in partner choices we allow the utility a person derives from being matched to their partner to depend on the specific match.
Contemporaneous utilities have a common component depending on observable characteristics (which we seek to infer from the distribution of matches), and an unobserved component drawn from a known distribution.

Further, for each side of the market's population there exists an equivalence relation which induces finite partitions $ \mathcal{M} $ and $ \mathcal{F} $, respectively, such that any individual on one side of the market is indifferent over any two individuals within the same equivalence class of the opposite side of the market.
This allows discrete types and makes matching tractable.
Assumption \ref{ass:TU} formalises the standard assumptions. %

\begin{ass}
\label{ass:TU}
For any equivalence class (which we refer to as ``individual type'') $ m \in \mathcal{M} $, $ f \in \mathcal{F} $, and time period $ t \in \mathcal{T} $ let $ it \in m $ and $ jt \in f $ be two individuals from opposite sides of the marriage market. Let $ \tilde{U}_{ijt} $ and $ \tilde{V}_{ijt} $ be their structural latent utilities and $ z_{it} $ and $ z_{jt} $ their observed characteristics.
    \begin{enumerate}
      \item Additive random utility: $ \tilde{U}_{ijt} =  U^0(z_{it}, z_{jt}) + \varepsilon_{ijt} $ and $  \tilde{V}_{ijt} = V^0(z_{it}, z_{jt}) + \eta_{ijt} $,
      \item Individuals have preferences over their partner's type only: 
        \begin{enumerate}
          \item Heterogeneity is  $  (z_{it}, z_{jt}) = (z_{mt}, z_{ft}) $ for all $ it \in m, jt \in f $,
          \item $ (\varepsilon_{ijt}, \eta_{ijt})  = (\varepsilon_{ift}, \eta_{mjt}) $ with random vectors $ \bm{\varepsilon}_{it} = (\varepsilon_{ift})_{f \in \mathcal{F}} $ and $ \bm{\eta}_{jt} = (\eta_{mjt})_{m \in \mathcal{M}} $,\footnote{This restricts the dimension of the unobserved heterogeneity for all individuals of the same type: $ \varepsilon_{ij} = \varepsilon_{ij'} $ for all $ j,j' \in f, $ and $ \eta_{ij} = \eta_{i'j} $ for all $ i,i' \in m $. We write $ \varepsilon_{if} $ and $ \eta_{mj} $ shorthand for the respective $ \varepsilon_{ij} $ and $ \eta_{ij}  $.}
      \end{enumerate}
    \item The distributions of $ \bm{\varepsilon}_{it} $  and $ \bm{\eta}_{it} $ are heteroskedastic and conditionally time-independent: $ P_m(\bm\varepsilon_{it} | m_{t}, m_{t-1}, \bm\varepsilon_{it-1}) = P_m(\bm\varepsilon_{it} | m_{t}) $ and $ Q_f(\bm\eta_{jt} | f_{t}, f_{t-1}, \bm\eta_{jt-1}) = Q_f(\bm\eta_{jt} | f_{t}) $,
      \item Transferable Utility: $ \tilde{\Phi}_{ijt} = \tilde{U}_{ijt} + \tilde{V}_{ijt} \equiv \Phi_{mft} + \varepsilon_{ift} + \eta_{mjt} $ where $ \Phi_{mft} = U^0_{mft} + V^0_{mft} = U^0(z_{mt}, z_{ft}) + V^0(z_{mt}, z_{ft}) $,
        \item Normalisation of the ``remaining single'' outside options: $ (\tilde{\Phi}_{i0t}, \tilde{\Phi}_{0jt}) = (\varepsilon_{i0t}, \eta_{0jt}) $.
    \end{enumerate}
\end{ass}

The resulting estimation problem is an inverse problem, as is often the case in structural modelling.
We observe the outcome, in this case the equilibrium distribution of matches, and want to learn which systematic utilities $ U_{mft} $ and $ V_{mft} $ rationalise this outcome.
Formally, the matching outcome is characterised by a matrix  $ \bm{\bar{\mu}}_t $ which collects the equilibrium market shares $ \mu_{mft}$ of couples of type $ mf \in \mathcal{M} \times \mathcal{F} $ at time $ t $.
Furthermore, we collect unmatched men into the vector $ \bm\mu_{\bm{\cdot}0t} = (\mu_{m0t})_{m\in \mathcal{M}} $ and unmatched women into the vector into $ \bm\mu_{0\bm{\cdot}t} = (\mu_{0ft})_{f \in \mathcal{F}} $,  and define:

\begin{dfn}\label{dfn:couplings}
  The couplings matrix is the equilibrium, observed outcome of the matching market in period $ t \in \mathcal{T} $.
\begin{equation}
\bm{\mu}_t = \left[\begin{matrix} \bm{\bar{\mu}}_{t} & \bm\mu_{\bm{\cdot}0t} \\ \bm\mu_{0\bm{\cdot}t}^T & 0 \end{matrix} \right].
\end{equation}
\end{dfn}

Individuals maximise their structural utility net of a match-specific transfer they pay to (or receive from) their partner in equilibrium: $ U_{mf}(\bm\mu_t) = U^0_{mf} + \tau_{mf}(\bm\mu_t) $ and $ V_{mf}(\bm\mu_t) = V^0_{mf} - \tau_{mf}(\bm\mu_t)$.
Their indirect utilities are thus given by:
$$ U_{mt}^*(\bm\mu_t) = \mathbb{E}_{P_m} \max\limits_{f \in \mathcal{F}} \{ U_{mf}(\bm\mu_t) + \bm\varepsilon_{ft} \} \qquad  V_{ft}^*(\bm\mu_t) = \mathbb{E}_{Q_f} \max\limits_{m \in \mathcal{M}} \{ V_{mf}(\bm\mu_t) + \bm\eta_{mt} \}.  $$

Let $ \bm{M}_t $, and $ \bm{F}_t $  be probability measures supported on $ \mathcal{M} $ and $ \mathcal{F} $ with $ M_{mt} $, and $ F_{ft} $ measuring the events $ \{ it \in m \} $ and $ \{jt \in f \} $, respectively.
Any matching allocation $ \bm\mu_t $ must satisfy the following feasibility constraint
\begin{equation}
  \Gamma(\bm{M}_t, \bm{F}_t) = \left\{ \bm\mu \in \Delta \; \Bigg| \; \forall m \in \mathcal{M} :  \sum\limits_{f \in \mathcal{F} \cup \{0\} } \mu_{mft} = M_{mt} \text{ and } \forall f \in \mathcal{F} :  \sum\limits_{m \in \mathcal{M} \cup \{0\} } \mu_{mft} = F_{ft} \right\}
\end{equation}
where $ \Delta $ is the $ ((|\mathcal{M}|+1)(|\mathcal{F}|+1)-1)$-dimensional unit-simplex. We can immediately write down the expected utilitarian matching surplus of the population.
\begin{align}
\text{Welfare}_t
    &=\max\limits_{\bm\mu_t \in \Gamma(\bm{M}_t,\bm{F}_t)} \left\{ \sum\limits_{m \in \mathcal{M}} M_{mt} U_{mt}^*(\bm\mu_t) +
    \sum\limits_{f \in \mathcal{F}} F_{ft} V_{ft}^*(\bm\mu_t) \right\} \\
  \label{eq:primal} &= \max\limits_{\mu_t \in \Gamma(\bm{M}_t, \bm{F}_t)} \left\{ \sum_{mf \in \mathcal{M}\times\mathcal{F}} \mu_{mft} \Phi_{mft} + \mathcal{E}(\bm\mu_t, \bm{M}_t, \bm{F}_t) \right\} 
\end{align}
where the second equality is from \citet{Galichon2012}, who show equivalence to an entropy-regularised optimal transport problem and prove strong duality.\footnote{The generalised entropy term $ \mathcal{E}(\mu, M, F) = -\sum_{m \in \mathcal{M}} M_{mt} \mathbb{E}_{P_m} \varepsilon_{ift} - \sum_{f \in \mathcal{F}} F_{ft} \mathbb{E}_{Q_f} \eta_{mjt} $ arises due to the additive separability of the ARUM. With logistic errors the term becomes $\mathcal{E}_{\sigma}(\mu, M, F) =   -\sum_{m} M_{mt} \sigma_{m} H(P_m) - \sum_{f} F_{ft} \sigma_{f} H(Q_f) $, where $ H(P) = -\sum_{\ell} P_{\ell} \log(P_{\ell})$ is the Shannon entropy. If all errors are Gumbel $ P = P_m = Q_f $ it has the familiar form from \citet{Choo2006}. }

We let $ \varepsilon $ and $ \eta $ be Gumbel with type-specific scales $ \sigma_m $ and $ \sigma_f $, respectively.
Then the entropy term reduces to a combination of Shannon entropies weighted by the scales of the unobserved heterogeneity.
Intuitively, $ \mathcal{E} $ rationalises the unobserved heterogeneity in the data by regularising the joint distribution $ \mu_t $ towards a uniform distribution, i.e. random matching.
We allow for different scales because we see very different matching patterns between married and unmarried types.\footnote{Being single and unmarried is more attractive than being divorced, which is difficult to rationalise using the same scale of unobserved heterogeneity.}
The deterministic part of the overall welfare can be written as $  \sum_{mf \in \mathcal{M}\times\mathcal{F}} \mu_{mft} \Phi_{mft} = \text{tr}(\bm{\Phi}^T \bm{\bar{\mu}}_t) $.
We parameterise the match surplus $ \bm{\Phi} $ by a sieve with the appropriate basis functions for square matrices (symmetric, assortative, fully saturated) to get
\begin{equation}
    {\Phi}_{\bm{\gamma}} \in \left\{ \sum_{k = 1}^K \gamma_k \phi_k  : \phi \in \mathfrak{B} \right\}.
\end{equation}
where we stack $ \gamma $ and also the scales of $ \mathcal{E}_{\sigma} $ into the parameter vector ${\theta} $ and write $ \bm{\Phi}_{\theta} $ and $ \mathcal{E}_{\theta} $ to highlight this dependence.

To incorporate dynamics we now let the types evolve over time and consider the planner's intertemporal problem.
From Assumption \ref{ass:TU} we have $ \bm\varepsilon_{it} $ and $ \bm\eta_{jt} $ that are mutually independent within and across time.
Thus, in each period a new copy of unobserved preferences is realised, making the previous allocation of $\bm{\mu}$ suboptimal.

With this law of motion, we can define the process of state-action pairs $ \bm{\tau}_t = ((\bm{M}_t, \bm{F}_t), \bm{\mu}_t)$ under policy $ \bm\mu_t \sim \pi_{\theta, \gamma}(\bm{\cdot} | \bm{M}_t, \bm{F}_t) \in \Gamma(\bm{M}_t, \bm{F}_t) $. 
\begin{equation}
  \left\{\bm{\tau}\right\}_{t = 0}^{\Infty} = (
  \overbrace{((\bm{M}_0, \bm{F}_0), \bm\mu_0)}^{\bm\tau_0}, 
  \overbrace{(\underbrace{\Lambda(\bm{\mu}_0, \xi_0)}_{(\bm{M}_1, \bm{F}_1)},
  \underbrace{\pi_{\bm{\theta}}(\bm{M}_1, \bm{F}_1)}_{\bm{\mu}_1})}^{\bm{\tau}_1}, 
  \overbrace{(\underbrace{\Lambda(\bm{\mu}_1, \xi_1)}_{(\bm{M}_2, \bm{F}_2)}, 
  \underbrace{\pi_{\bm{\theta}}(\bm{M}_2, \bm{F}_2)}_{\bm{\mu}_2})}^{\bm{\tau}_2},
  \ldots).
\end{equation}
Since our policy is determined by the structure of the matching problem (which is deterministic), the policy $ \pi_{\theta,\gamma} $ is degenerate.
$ \gamma $ is a nuisance parameter arising from the approximation of the policy rule solving the dynamic choice problem.
We defer a more detailed discussion of its role until we have set up the dynamic problem.
The likelihood of a sequence of state-action pairs $\bm{\tau}_t$, given $ \theta $ and $ \gamma $ is
\begin{equation}
  \label{eq:likelihood}
  \nu(\bm{\tau}; \theta, \gamma) = \prod_{t=0}^{\Infty} \pi_{\theta, \gamma}(\bm{\mu}_t|\bm{M}_t,\bm{F}_t) P_{\xi}(\xi_t)  \delta((\bm{M}_{t+1},\bm{F}_{t+1}) - \Lambda(\bm{\mu}_t,\xi_t)).
\end{equation}

The planner's objective is to re-optimise to a new, inter-temporally efficient outcome by finding a matching-allocation  that is feasible for the current and expected future masses of types.
The discounted lifetime value of welfare $R$ corresponding to the given state-action process $ \bm\tau $ is:
\begin{equation}
  R(\bm{\tau};\theta) = \sum\limits_{t=0}^{\Infty} \beta^t r(\bm{\tau}_t;\theta) 
  = \sum\limits_{t=0}^{\Infty} \beta^t \left( \text{tr}(\bm{\Phi}_{\theta}^T \bm{\bm{\bar{\mu}}}_t) + \mathcal{E}_{\theta}(\bm{\bm{\mu}}_t, \bm{\bm{M}}_t, \bm{F}_t) \right)
\end{equation}

With the expected lifetime welfare under a given policy $ {\pi}_{\theta,\gamma} $ we can define the state-action value function:
\begin{equation}
  \label{eq:stateactionvalue}
  Q(\bm{\tau};\theta, \gamma) = \int_{\bm{\tau}} R(\bm{\tau};\theta) \nu(\bm{\tau};\theta,\gamma) = r(\bm{\tau}_0;\theta) + \beta \int_{\bm{\tau}} \sum\limits_{t=1}^{\Infty} \beta^{t-1} r(\bm{\tau}_t;\theta) \nu(\bm{\tau};\theta,\gamma)
\end{equation}

Conditioning on the initial state $ \tau_0 $ we get (slightly misusing the notation):
\begin{align}
  Q(\bm{\tau};\theta, \gamma) &= r(\bm{\tau}_0;\theta) + \beta \int_{\bm{\tau}_1 = L'\bm{\tau}} R(\bm{\tau}_1;\theta) \nu(\bm{\tau}_1|\bm{\tau}_0; \theta, \gamma)  \\
  &= r(\bm{\tau}_0;\theta) + \beta \int_{\bm{\tau}_1}\left[ r(\bm{\tau}_1) + \beta \int_{\bm{\tau}_2=(L')^2 \bm{\tau}} R(\bm{\tau}_2;\theta)\nu(\bm{\tau}_2|\bm{\tau}_1, \bm{\tau}_0; \theta, \gamma)\right] \nu(\bm{\tau}_1|\bm{\tau}_0; \theta, \gamma)  \\
  &= r(\bm{\tau}_0;\theta) + \beta\int_{\bm{\tau}_1}\left[  r(\bm{\tau}_1) + \beta \int_{\bm{\tau}_2=(L')^2 \bm{\tau}} R(\bm{\tau}_2;\theta)\nu(\bm{\tau}_2|\bm{\tau}_1;\theta, \gamma)\right] \nu(\bm{\tau}_1|\bm{\tau}_0; \theta, \gamma) \\
&= r(\bm{\tau}_0;\theta) + \beta\int_{\bm{\tau}_1}  Q(L'\bm{\tau}; \theta, \gamma) \nu(\bm{\tau}_1|\bm{\tau}_0; \theta, \gamma),
\end{align}
where $ L' $ is the forward-lead operator.
The penultimate equation follows from conditional time-independence in Assumption \ref{ass:TU}.
Making the state and action-transitions explicit using our policy function and structure of collective decisions, we can write:
\begin{align*}
    Q(\bm{M}, \bm{F},&\bm{\mu};\theta,\gamma) = r(\bm{M}, \bm{F},\bm{\mu}) + \beta\int_\xi \left[\int_{\bm{\mu}'} Q(\Lambda(\bm{\mu},\xi),\bm{\mu}';\theta,\gamma)\pi_{\theta,\gamma}(\bm{\mu}'|\Lambda(\bm{\mu}, \xi))\right]dP_{\xi} \\
                     &= r(\bm{M}, \bm{F},\bm{\mu}) + \beta\int_{(\bm{M},\bm{F})'}\left[\int_\xi \delta((\bm{M},\bm{F})'-\Lambda(\bm{\mu},\xi))dP_{\xi} \right]\\
   &\hspace{4em} \times \left[\int_{\bm{\mu}'} Q((\bm{M},\bm{F})',\bm{\mu}';\theta, \gamma)\pi_{\theta,\gamma}(\bm{\mu}'|(\bm{M},\bm{F})') \right]d(\bm{M},\bm{F})' \\
   &= r(\bm{M}, \bm{F},\bm{\mu}) + \beta\int_{(\bm{M},\bm{F})'} 
   Q((\bm{M}',\bm{F}'),\bm\mu^*_{\theta,\gamma}(\bm{M}',\bm{F}');\theta,\gamma) d\mathcal{P}^*((\bm{M},\bm{F})'|\bm{M},\bm{F},\bm{\mu}) 
\end{align*}
where the last equality follows from the structure of collective decisions and the fact that we have a deterministic policy:
\begin{equation}
    \int_{\bm{\mu}'}
      Q((\bm{M}',\bm{F}'),\bm{\mu}';\theta,\gamma)\,
      \pi_{\theta,\gamma}(\bm{\mu}' \mid (\bm{M}',\bm{F}')) = 
    Q((\bm{M}',\bm{F}'),\bm\mu^*_{\theta,\gamma}(\bm{M}',\bm{F}');\theta,\gamma).
\end{equation}

For identification (discussed in the next section) we use the result from \citet*{Corblet2023}, who show that the competitive equilibrium of this repeated infinite-horizon matching model is equivalent to the planner's problem above.
They marginalise the joint transition probability $ \mathcal{P}^* : (\bm{M}, \bm{F}) \mapsto (\bm{M}', \bm{F}') $ to $ \mathbf{P}^m : (\bm{M}, \bm{F}) \mapsto \bm{M}' $ and $ \mathbf{P}^f :(\bm{M}, \bm{F}) \mapsto \bm{F}' $, respectively, allowing the decision of one partner to depend only on the state of the couple and not on the (simultaneous) decision of the partner.
We want to invoke their results and follow this approach.
We interpret this restriction as a compromise between completely unrestricted evolution of both types and independent evolution of men's and women's types, respectively.\footnote{Evaluating transition probabilities can be viewed as a tensor-contraction. Using Einstein sum notation we obtain $ \mathbf{P}^{m'\: m}_{\quad \quad f} (\bm{J}_m\bm{\mu})^{f}_{\; \; m} = \bm{M}^{m'} $, where $ \mathbf{P} \in \mathbb{R}^{|\mathcal{M}|+1} \otimes \mathbb{R}^{|\mathcal{M}|+1} \otimes  (\mathbb{R}^{|\mathcal{F}|} \rightarrow \mathbb{R})  $ is defined as a (1,2)-tensor, $ \bm{\mu} \in \mathbb{R}^{(|\mathcal{M}|+1) \times |\mathcal{F}|} $ is a (1,1)-tensor (matrix), and $ \bm{M} \in \mathbb{R}^{|\mathcal{M}|+1} $ a (0,1)-tensor (column-vector) where $ (\bm{\cdot}, \bm{\cdot}) $ denote contra- and covariant dimensions, respectively. Intuitively, the covariant dimension of the tensor is responsible for the marginalisation with respect to $ \mathcal{F} $. The tensor-based definition proves very useful for the \texttt{fold} operation in the computational part, for which we leverage the tensor-library \texttt{PyTorch}. For the remainder of this section, we vectorise $ \Vec ( \bm{J}_m \bm{\mu}) \in \mathbb{R}^{(|\mathcal{M}|+1)|\mathcal{F}|} $ which leads to the operation $ \mathbf{P}^{m'}_{\quad mf} (\Vec (\bm{J}_m \bm{\mu}))^{mf} = \bm{M}^{m'} $ equivalent to equations \eqref{eq:transPQ}. We can proceed similarly for $ \bm{Q} $.} 

\begin{ass}[Marginalisation of Transition Probabilities]
  \label{ass:transprob}
  Transitions from match types to individual types follow the following linear Markovian structure:\footnote{It is easy to show that $ \mathcal{P}^m : \bm\mu_t \mapsto (\mathbf{P}^m \otimes \bm{J}_m) \bm{K} \Vec \bm{\mu}_t $ and  $ \mathcal{P}^f : \bm\mu_t \mapsto (\bm{J}_f^T \otimes \mathbf{P}^f ) \Vec \bm{\mu}_t $ where $ \bm{K} $ is the commutation matrix and $ \bm{J}_m = [\bm{I}_{|\mathcal{M}| -1}, \mathbf{0} ]$, $ \bm{J}_f = [\bm{I}_{|\mathcal{F}| - 1}, \mathbf{0}]^T $.}
\begin{equation}
    \mathcal{P}^m \bm{\mu}_t = \bm{M}_{t+1} \qquad 
    \mathcal{P}^f \bm{\mu}_t = \bm{F}_{t+1}. \label{eq:transPQ}
\end{equation}
\end{ass}
This implies 
$  \mathcal{P}^*((\bm{M}',\bm{F}') \mid \bm{M},\bm{F},\bm{\mu}) = \delta_{(\mathcal{P}^m\bm{\mu},\mathcal{P}^f\bm{\mu})}(\bm{M}',\bm{F}')$, 
and thus:
\begin{align}
  \int_{(\bm{M}',\bm{F}')}
    \,
    Q\big((\bm{M}',\bm{F}'),\bm\mu^*_{\theta,\gamma}(\bm{M}',\bm{F}');\theta,\gamma\big)\,
\delta_{(\mathcal{P}^m\bm{\mu},\mathcal{P}^f\bm{\mu})}(\bm{M}',\bm{F}') \\
=
  Q\big((\mathcal{P}^m\bm{\mu},\mathcal{P}^f\bm{\mu}), \bm\mu^*_{\theta,\gamma}(\mathcal{P}^m\bm{\mu},\mathcal{P}^f\bm{\mu});\theta,\gamma\big).
\end{align}

Due to the feasibility constraint $ \bm\mu^*_{\theta,\gamma}(\bm{M}, \bm{F}) \in \Gamma(\bm{M}, \bm{F}) $ this significantly simplifies the Bellman equation to
\begin{equation}
  \label{eq:bellman}
  Q(\bm{M}, \bm{F}, \bm{\mu}; \theta, \gamma) =
  \underbrace{\text{tr}(\bm{\Phi}_{\theta}^T \bm{\mu}) +
  \mathcal{E}_{\theta}(\bm{\mu}, \bm{M}, \bm{F})}_{r_\theta(\bm M, \bm F, \bm \mu)} +
        \beta Q(\mathcal{P}^m\bm{\mu}, \mathcal{P}^f\bm{\mu}, \bm{\mu}; \theta, \gamma).
\end{equation}

Using the envelope theorem, taking first order conditions leads to the following equilibrium restrictions defining our model:
\begin{align}
    \label{eq:bellmanresid}
    \mathcal{R}(\bm{M}, \bm{F}, \bm{\mu}; \theta, \gamma) &\equiv Q(\bm{M}, \bm{F}, \bm{\mu}; \theta, \gamma) - r_\theta(\bm{M}, \bm{F}, \bm{\mu}) - \beta Q(\mathcal{P}^m\bm{\mu}, \mathcal{P}^f\bm{\mu}, \bm{\mu}; \theta, \gamma) = 0, \\
    \label{eq:foc}
    \mathcal{F}(\bm{M}, \bm{F}, \bm{\mu}; \theta, \gamma) &\equiv \frac{dQ(\bm{\mu})}{d \Vec \bm{\mu}^T} = \frac{d\text{tr}(\bm{\Phi}_{\theta}^T \bm{\mu})}{d \Vec \bm{\mu} ^T} + \beta \left( \frac{dQ}{d \bm{M}^T} \frac{d\bm{M}}{d \Vec \bm{\mu}^T } + 
    \frac{dQ}{d \bm{F}^T} \frac{d\bm{F}}{d \Vec \bm{\mu}^T }\right) = \mathbf{0},
\end{align}

We define the equilibrium of the matching market directly.
The allocation $\bm{\mu}_t$ and transfers $(\tau_{mf,t})$ form a dynamic
competitive equilibrium if, given the transfers, every individual's
assignment under $\bm{\mu}_t$ maximises their expected lifetime utility,
inclusive of continuation values, over the feasible alternatives in Table
\ref{tab:feasible} and the outside option, and $\bm{\mu}_t$ is consistent
with the population masses $(\bm{M}_t, \bm{F}_t)$.
Our identifying restriction is that the observed allocation constitutes
such an equilibrium in every period.
In the static model, this requirement is equivalent to stability: no pair
could dissolve their matches and re-match with a transfer that leaves both
strictly better off \citep{Shapley1971, Becker1973}.
Theorem 4 of \citet*{Corblet2023} extends the equivalence with the
planner's problem to the repeated game, which we invoke.

The intuition is that the marginalisation of the transition kernel in
Assumption \ref{ass:transprob} splits the dynamics into a male and a female
part, so that the male and female Bellman equations aggregate additively
into the overall Bellman equation.
Due to transferability of utility, linearity of both the trace and the entropy (with respect to $ \bm{M} $ and $ \bm{F} $), we have $ r_\theta(\bm{M}, \bm{F}, \bm{\mu}) = r^m_\theta(\bm{M}, \bm{F}, \bm{\mu}) + r_\theta^f(\bm{M}, \bm{F}, \bm{\mu}) $.
In conjunction with  the separability from the marginalisation $ \mathcal{P}((\bm{M},\bm{F})'| (\bm{M}, \bm{F}), \bm{\mu})  = \mathcal{P}^m(\bm{M}'|\bm{M}, \bm{F}) \mathcal{P}^f(\bm{F}'|\bm{M}, \bm{F}) %
$ the Bellman equation in \eqref{eq:bellman} can be rewritten as:
\begin{align}
  Q(\bm{M}, \bm{F},\bm{\mu};\theta) &= r^m_\theta(\bm{M}, \bm{F},\bm{\mu}) + \beta \mathcal{P}^m(\bm{M}'|\bm{M},\bm{F},\bm{\mu}) Q^m(\bm{M}',\bm{\mu}';\theta) \\
 &+ r^f_\theta(\bm{M}, \bm{F},\bm{\mu}) + \beta \mathcal{P}^f(\bm{F}'|\bm{M},\bm{F},\bm{\mu}) Q^f(\bm{F}',\bm{\mu}';\theta).
\end{align}

\subsection{Computation}
\label{app:computation}
With many continuous states and actions, solving the recursive problem is difficult.
Nested fixed-point methods are inefficient and prone to numerical instability, whereas standard projection-based methods are limited in their ability to approximate high-dimensional continuous states spaces. 
Further, due to the recursive nature of the problem, while $ \frac{d\bm{F}}{d \Vec \bm{\mu}^T } $ and $ \frac{d\bm{M}}{d \Vec \bm{\mu}^T } $ are easy to obtain from Assumption \ref{ass:transprob}, the derivatives $ \frac{dQ}{d\bm{M}^T} $ and $ \frac{dQ}{d\bm{F}^T} $ are not straight-forward to calculate.
Thus, we will make use of a neural net to approximate the problem and calculate these gradients.

For this, we set up a multi-layer neural-net that approximates both the value function and the policy rule simultaneously: $ \Xi_{\theta,\gamma} : (\bm{M}, \bm{F}) \mapsto (\bm\mu^*_{\theta,\gamma}, Q_{\theta,\gamma}) $.
It takes simulated states $ (M, F) $ and outputs the policy function $ \bm\mu^*_{\bm{\gamma},\bm{\theta}} $ and the value function $ Q_{\bm{\gamma},\bm{\theta}} $, both now implicitly parameterised by the structural parameters $ {\theta} $ and some high-dimensional nuisance parameter $ {\gamma} $ from the approximation.

We define the loss as a weighted $\ell_2$-residual of the derivative of the Lagrangian  \eqref{eq:foc} and the Bellman residual defined in \eqref{eq:bellmanresid} \citep*{maliar2021}.
\begin{equation}
  \widehat{{\gamma}}(\theta) = \arg\min_{\bm{\gamma}} w_1 \left\Vert \mathcal{R}(\bm{M}, \bm{F}, \bm{\mu}; \theta, \gamma) \right\Vert^2  + w_2 \left\Vert \mathcal{F}(\bm{M}, \bm{F}, \bm{\mu}; \theta, \gamma) \right\Vert^2.
\end{equation}

\begin{algorithm}
  \caption{Projection-Based Algorithm for Bellman Equation in \eqref{eq:bellmanresid}}
  \setstretch{1.25}
\begin{algorithmic}[1]
\REQUIRE Initialise parameters $\kappa$, $\gamma$, $\beta$, $w_1$, $w_2$, $N$, $\epsilon$, $t \gets 0$
\ENSURE Optimised parameters and convergence
\REPEAT
\STATE Draw $\bm{M}, \bm{F} \sim \text{Dirichlet}(\kappa \cdot \boldsymbol{\iota}_{K})^{N}$ with $ K = |\mathcal{M}|, |\mathcal{F}| $
\STATE Matching $(\bm{\mu}, Q_{\theta, \gamma}) = \Xi_{\theta, \gamma}(\bm{M}, \bm{F}) $
    \STATE Household choices $(\bm{M}', \bm{F}') = (\mathcal{P}^m \bm{\mu}, \mathcal{P}^f \bm{\mu})$
    \STATE Matching $(\bm\mu', Q_{\theta,\gamma}') = \Xi_{\theta,\gamma}(\bm{M}', \bm{F}') $
    \STATE Prepare for \texttt{autograd}: $\tilde{Q}_{\theta,\gamma} = \sum_{g = 1}^{N} (r_{\theta,\gamma,g} + \beta \cdot Q_{\theta,\gamma,g})$
    \STATE Calculate loss: $l(\bm{\gamma}) =  w_1 \cdot \left\|Q_{\theta,\gamma} - r_{\theta,\gamma} - \beta Q_{\theta,\gamma}'\right\|^2 + w_2 \cdot \left\|\frac{\partial \tilde{Q}_{\theta,\gamma} }{\partial \Vec \bm{\mu}}\right\|^2$
    \STATE Calculate $ \nabla_{\bm{\gamma}} \ell(\bm{\gamma}) $ by backpropagation and take step.
    \STATE $t \gets t + 1$
\UNTIL{$|\ell(\gamma)_t - \ell(\gamma)_{t-1}| < \epsilon$ or maximum iterations reached}
\RETURN optimised parameters $ \gamma $ describing model $ \bm\mu^{*} $ 
\end{algorithmic}
\end{algorithm}

Two remarks are in order. 
First, to obtain gradients that enter the value function directly, rather than the backpropagated gradients, we use automatic differentiation which is readily available in many tensor libraries such as \texttt{PyTorch} or \texttt{Tensorflow}.

Second, we need to ensure that couplings add up to the marginals. 
The entropic regularisation does not in general ensure this.
Enforcing this with hard constraints during optimisation is difficult for neural networks because the relationship between the parameters $ \gamma $ and the outcome is opaque.
To circumvent this, for the final layer of $ \Xi $ we use the Sinkhorn matrix scaling algorithm \citep{sinkhorn1964} to ensure that the marginals are satisfied:\footnote{This can be compared to a two-dimensional \texttt{softmax} function. It is a continuous and smooth operation and thus differentiable.} 
$$ \Xi_{\theta,\gamma}(\bm{M}_{t}, \bm{F}_{t}) \in \Gamma(\bm{M}_{t}, \bm{F}_{t}). $$

\begin{algorithm}
\caption{Sinkhorn Matrix Scaling Algorithm}
\begin{algorithmic}[1]
\REQUIRE Output: $\bm{Y} \in \mathbb{R}^{|\mathcal{M}| \times |\mathcal{F}|}$, $\bm{M} \in \mathbb{R}^{|\mathcal{M}|}_{+}$, and $\bm{F} \in \mathbb{R}^{|\mathcal{F}|}_{+}$
\ENSURE Scaled matrix $\bm{\mu}$ such that $\bm{\mu}\bm{\iota}_{|\mathcal{F}|} = \bm{M}$ and $\bm{\mu}^T\bm{\iota}_{|\mathcal{M}|} = \bm{F}$
\STATE $\bm{\mu}^{(0)} \gets \exp{\bm{Y}}$ \COMMENT{component-wise exponentiation}
\STATE $t \gets 0$
\REPEAT
    \STATE $\bm{d}^{(t)} \gets \bm{M} \oslash (\bm{\mu}^{(t)}\bm{\iota}_{|\mathcal{F}|})$ \COMMENT{element-wise division}
    \STATE $\bm{\mu}^{(t+\frac{1}{2})} \gets \text{diag}(\bm{d}^{(t)})\bm{\mu}^{(t)}$ \COMMENT{apply row scaling}
    \STATE $\bm{e}^{(t)} \gets \bm{F} \oslash (\bm{\mu}^{(t+\frac{1}{2})})^T\bm{\iota}_{|\mathcal{M}|}$ 
    \STATE $\bm{\mu}^{(t+1)} \gets \bm{\mu}^{(t+\frac{1}{2})}\text{diag}(\bm{e}^{(t)})$ \COMMENT{apply column scaling}
    \STATE $t \gets t + 1$
    \STATE $\text{err} \gets \max\{\|\bm{\mu}^{(t)}\bm{\iota}_{|\mathcal{F}|} - \bm{M}\|_\Infty, \|(\bm{\mu}^{(t)})^T\bm{\iota}_{|\mathcal{M}|} - \bm{F}\|_\Infty\}$
\UNTIL{$\text{err} < \epsilon$}
\RETURN $\bm{\mu}^{(t)}$
\end{algorithmic}
\end{algorithm}

This algorithm is very efficient and convergence is often obtained after few iterations.
Because we have to retain gradients through this iterative scaling, we truncate the number of iterations in training and only use a convergence criterion in evaluation.

\subsection{Maximum Likelihood Estimation of Structural Parameters}
\label{app:estimation}
So far, we have optimised for a fixed parameter $ \bm{\theta} $, and thus a given surplus $ \bm{\Phi}_{{\theta}} $.
We are, however, interested in the inverse problem: we observe an outcome $ \bm{\mu} $ and aim to estimate $ \bm{\theta} $.
We can do this by matching the distribution of observed actions or states to the optimal distribution according to our model, varying the parameter $ \bm{\theta} $.
Because state and actions are finite-dimensional (and continuous), matching the probability of each type of individual (states) or each type of match (action) to the respective empirical probability mass function this boils down to moment-matching\footnote{See \citet{Galichon2012,GalichonSalanie2022} for an extended discussion for the static case.}

What makes this problem non-trivial, is that the objective is not jointly convex.
To see this, note that the outer loss depends on $ \theta $ through $ \gamma $ (the optimiser of the inner program) and thus: $ \ell(\gamma, \theta) =  m(\gamma(\theta)) $.
Letting $G({\gamma}, {\theta}) = \frac{\partial \ell({\gamma},\theta)}{\partial \gamma} = 0$, we note, by the implicit function theorem, that $G(\gamma,\theta) = 0 \iff \frac{d\gamma}{d\theta} =  -\left(\frac{\partial G}{\partial \gamma}\right)^{-1} \frac{\partial G}{\partial \theta} $. If $ \frac{\partial G}{\partial \gamma} $ exists, we get $\frac{d\gamma}{d\theta} = -\left(\frac{\partial^2 \ell}{\partial \gamma^2}\right)^{-1} \frac{\partial^2 \ell}{\partial \gamma \partial \theta}$
through which we can obtain the gradient of our outer procedure:
$$\frac{dm}{d\theta} = \underbrace{\frac{\partial m}{\partial \gamma}}_{\text{backpropagation}} \cdot \frac{d\gamma}{d\theta} = - \frac{\partial m}{\partial \gamma} \left(\frac{\partial^2 \ell}{\partial \gamma^2}\right)^{-1} \frac{\partial^2 \ell}{\partial \gamma \partial \theta}.$$
Inverting the Hessian with respect to the nuisance parameter $ \gamma $ is not feasible because of its high dimensionality.
However, we can get $ dm/d\theta $ if we retain the gradient $ d\gamma/d\theta $ from the inner program, at least along the path of descent of the outer program allowing us to solve the program by zig-zagging between the dimensions of $ \theta $ and $ \gamma $.

Let $ (T, |\mathcal{M}|+1, |\mathcal{F}|+1) $ be the dimensions of $\hat{\boldsymbol{\mu}}$.
Recall that $ \Xi_{\theta,\gamma} = (\bm\mu_{\theta,\gamma}, Q_{\theta,\gamma}) $, both functions.
We can, recursively, define an optimal policy process for a fixed $ \theta \in \Theta $:
\begin{align*}
    \bm\mu_{\theta,\gamma,0}^* &= \widehat{\bm{\mu}}_0  \\
    \bm\mu_{\theta,\gamma,t+1}^* &= \bm\mu_{\theta,\gamma}(\bm{M}^*_{\theta,\gamma,t+1}, \bm{F}^*_{\theta,\gamma,t+1})
\end{align*}
where $ \bm{M}^{*}_{\theta,\gamma,t+1} = \mathcal{P}^m \bm\mu_{\theta,\gamma,t}^{*} $ and $ \bm{F}^*_{\theta,\gamma,t+1} = \mathcal{P}^f \bm{\mu}_{\theta,\gamma,t}^{*} $.

We then match the observed and model-implied conditional distributions by minimising the conditional Kullback-Leibler divergence (CKL) over all $ T $ periods:
\begin{equation}
    \label{eq:optim1}
    \widehat{\theta}_{\gamma} =\arg\min\limits_{{\theta} \in \Theta} CKL({\theta}, {\gamma}) + \alpha \ell(\gamma)%
\end{equation}
where $ \ell(\gamma) $ is a regularisation with respect to the loss of the inner objective, penalising deviations from the model's equilibrium constraints. 
\[
\mathrm{CKL}(\theta,\gamma)
= \sum_t \left[ \sum_m
\widehat{\bm{M}}_{m,t}\,
\mathrm{KL}\!\big(\widehat{\bm{\mu}}_{\bm\cdot|m,t}\,\|\,\bm{\mu}_{\bm\cdot|m,t}(\theta,\gamma)\big)
+ \sum_f
\widehat{\bm{F}}_{f,t}\,
\mathrm{KL}\!\big(\widehat{\bm\mu}_{m|\bm\cdot,t}\,\|\,\bm\mu_{m|\bm\cdot,t}(\theta,\gamma)\big) \right],
\]
where 
\[
\bm\mu_{f|m,t}
= \frac{\bm\mu_{mft}}{\bm{M}_{mt}},
\qquad \bm\mu_{m|f,t}
= \frac{\bm\mu_{mft}}{\bm{F}_{ft}},
\]
and $ KL(a \| b) \equiv \sum_k a_k \log\frac{a_k}{b_k}$  is the standard Kullback-Leibler divergence under measure $ a $, in our case the observed conditional distribution of couplings.
We minimise the distance of the model implied conditional distribution to the observed conditional distribution, weighted by the empirical marginals.

Minimising the KL divergence of conditional distributions has the advantage that such an objective is robust to estimation errors of the marginals which are governed by the Markov transitions and not part of the policy function.
While the transition operators are unbiased and consistent estimators, this does not guarantee that within any two periods the marginals are correctly carried forwards.
Conditioning avoids that any lacking or excess mass biases the estimation our policy function.

What happens if we don't regularise? We have three tuning parameters $ w_1 $, the weight of the Bellman residual, $ w_2 $, the weight of the first order condition residual, and $ \lambda $, the weight of the regularisation in the outer loop.
If we set $ \alpha = 0 $ and $ w_2 = 0 $, we only minimise the Bellman residual in the inner loop and fit the model to the data in the outer loop.
In this case the (approximated) policy function only enters the inner problem through it's next period state, but its gradient is not part of the optimisation.
This is equivalent to inverse reinforcement-learning \citep{Ng2000} with an approximate the value function $Q_{\theta, \gamma} $ and reward $ r(\theta) $ for which the outer objective becomes $ \min_{\theta} KL(\theta, \gamma^*(\theta)) \quad \text{subject to} \quad R_{\theta, \gamma^*(\theta)} = 0 $.
This allows the outer minimiser to pick $ \theta $ that pushes the inner model outside the matching models' structure while still maintaining dynamic consistency through the Bellman equation.
We can thus interpret $ \lambda $ as a regularisation parameter that allows us to fine-tune the stringency of the model structure, and thus control a bias-variance trade-off.

\subsection{Connecting the Structural Model to Reduced Form Hazards}
\label{app:hazard}

\begin{lem}[Hazards and Surplus]
\label{lem:hazard}
Fix $t$ and a married marital-role type $mf\in\{\mathsf{CH},\mathsf{CW}\}^2$.
Let $\lambda_{0,mf},\lambda_{1,mf}\in(0,1)$ denote the per-period divorce
hazards for type $mf$ under the high-cost regime ($d=0$) and the low-cost
regime ($d=1$), respectively, and let
$\bar\sigma_{mf}\equiv\sqrt{\sigma_m^2+\sigma_f^2}$ denote the composite
dispersion.
Then the decline $\psi_{mf}$ in the net surplus of remaining married under
the low-cost regime, as in $\Phi_{mf}(d)=\Phi_{mf,0}-d\,\psi_{mf}$, satisfies
\[
\psi_{mf}\;\approx\;
\bar\sigma_{mf}\left[
\log\frac{\lambda_{1,mf}}{1-\lambda_{1,mf}}
-\log\frac{\lambda_{0,mf}}{1-\lambda_{0,mf}}
\right].
\]
\end{lem}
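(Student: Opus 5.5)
The plan is to derive the divorce hazard directly from the couple's joint participation condition in Section~\ref{theoryindiv}. That condition says a married couple of type $mf$ stays married if
\[
\Phi_{mf}(d) + \beta Q_{mf,t+1} + \sigma_m(\varepsilon_{ift}-\varepsilon_{i0t}) + \sigma_f(\eta_{mjt}-\eta_{0jt}) > 0 .
\]
Write $S_{mf}(d)\equiv \Phi_{mf}(d)+\beta Q_{mf,t+1}$ for the deterministic part. The divorce hazard is then the probability that the composite shock $\zeta\equiv\sigma_m(\varepsilon_{ift}-\varepsilon_{i0t})+\sigma_f(\eta_{mjt}-\eta_{0jt})$ falls below $-S_{mf}(d)$.

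Under the Gumbel assumption of Assumption~\ref{ass:TU}, each difference of two independent Gumbels is standard logistic. So $\zeta$ is a sum of two independent scaled logistics with variance $\tfrac{\pi^2}{3}(\sigma_m^2+\sigma_f^2)$. This sum is not exactly logistic, which is the source of the ``$\approx$'' in the statement. I will approximate it by a single logistic with scale $\bar\sigma_{mf}=\sqrt{\sigma_m^2+\sigma_f^2}$, chosen to match the variance. This gives
\[
\lambda_{d,mf}\approx \frac{1}{1+\exp\{S_{mf}(d)/\bar\sigma_{mf}\}},
\qquad
\log\frac{\lambda_{d,mf}}{1-\lambda_{d,mf}}\approx -\frac{S_{mf}(d)}{\bar\sigma_{mf}} .
\]
This is the step I expect to be the main obstacle, since it carries all the approximation. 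I would justify it by variance matching and note two facts: the approximation is exact when one of the scales vanishes, and the logistic/normal closeness makes the error small in the central range of hazards. An alternative would be a probit-type Gaussian approximation of the sum, rescaled to the logistic.

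The remaining step differences the log-odds across regimes. The key is whether the continuation value $\beta Q_{mf,t+1}$ cancels. The lemma fixes $t$ and compares hazards at the 2009 cutoff, where the RDD holds the population state $(M_t,F_t)$ and the continuation value fixed in the limit. So I will argue that locally $Q_{mf,t+1}$ is common to both sides of the discontinuity. What changes at the cutoff is the contemporaneous stay premium, $\Phi_{mf}(1)-\Phi_{mf}(0)=-\psi_{mf}$. Then
\[
\log\frac{\lambda_{1,mf}}{1-\lambda_{1,mf}}-\log\frac{\lambda_{0,mf}}{1-\lambda_{0,mf}}\approx \frac{S_{mf}(0)-S_{mf}(1)}{\bar\sigma_{mf}}=\frac{\psi_{mf}}{\bar\sigma_{mf}} ,
\]
and rearranging gives the claim. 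Finally, I will remark that for small hazards the log-odds reduce to $\log\lambda$, which recovers the log-hazard-ratio form used in the main text.
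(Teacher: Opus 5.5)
Your proposal is correct and follows the paper's proof essentially step for step. Both treat staying married versus divorcing as a binary comparison, approximate $\sigma_m\xi_{it}+\sigma_f\xi_{jt}$ by a single variance-matched logistic with scale $\bar\sigma_{mf}$, and difference the log-odds across regimes. The only difference is bookkeeping: the paper folds $\beta Q_{mf,t+1}$ into the regime-invariant term $\Delta\Phi_{mf,0}(t)$ by definition, whereas you state explicitly, via the local RDD argument, the assumption that the continuation value is the same on both sides of the cutoff.
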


\begin{proof}
Let $\Delta\Phi_{mf}(t,d)=\Delta\Phi_{mf,0}(t)-\psi_{mf}\,d$ denote the net
surplus advantage of remaining married in constellation $mf$ under regime $d$.
Under transferable utility, the couple divorces
iff the joint net gain from staying is negative. With logistic shock
differences on each side scaled by $\sigma_m,\sigma_f$, the divorce event is
\[
\Delta\Phi_{mf}(t,d)+\sigma_m \xi_{it}+\sigma_f \xi_{jt}<0.
\]
Approximating $\sigma_m \xi_{it}+\sigma_f \xi_{jt}$ by a single logistic term
with matched variance, $\bar\sigma_{mf}\xi_t$ where
$\bar\sigma_{mf}=\sqrt{\sigma_m^2+\sigma_f^2}$, yields
\[
\lambda_{mf}(t,d)\approx
\Lambda\!\left(-\frac{\Delta\Phi_{mf,0}(t)-\psi_{mf}\,d}{\bar\sigma_{mf}}\right).
\]
Taking log-odds and differencing between $d=1$ and $d=0$ gives
\[
\log\frac{\lambda_1}{1-\lambda_1}-\log\frac{\lambda_0}{1-\lambda_0}
\approx \frac{\psi_{mf}}{\bar\sigma_{mf}},
\]
which rearranges to the expression in the lemma.
\end{proof}

\noindent \textbf{Note:} For small hazards, the log-odds ratio approximates the log-hazard ratio, in which case this simplifies to: 
\begin{equation}
  \psi_{mf} \approx \bar{\sigma}_{mf} \left[ \log\frac{\lambda_{1,mf}}{1-\lambda_{1,mf}} - \log\frac{\lambda_{0,mf}}{1-\lambda_{0,mf}} \right] \approx  \bar\sigma_{mf} \log\left(\frac{\lambda_{1,mf}}{\lambda_{0,mf}}\right).
\end{equation}

\noindent \textbf{Note:} Hazard ratios are identified from population expectations. Because the costs enter surpluses as parameters, in order to make individual-level statements, we rely on the monotonicity assumption on the stay-premium $ \Psi $ stating that nobody prefers paying a higher cost over a lower cost.

The RDD evaluates divorce incidence in marriages observed within a three-year window around the ban, a restriction that keeps the sample relatively homogeneous and lets us abstract from compositional change. The unit of observation is divorce for marriage $i$ in period $t$, where a period is a three-month interval. These intervals are shifted one month back from standard calendar quarters, so that the reform date (1 March 2009) is the first day of the first post-reform interval.
We exclude the first interval following the ban (donut), due to distortions of the divorce incidence caused by post-reform processing delays.
Letting $d_t \equiv \mathbf{1}\{t < c\}$ be the availability of flash divorce before the ban date $c$ we specify a logit model as a linear spline around $ c$:
\begin{equation}
\Pr(y_{it} = 1 \mid \bm{d}_{t}, \bm{q}_{t}) = \beta_0 + \beta_1\,d_t + \beta_2\,(t - c)
  + \beta_3\,d_t\,(t - c) + \bm{q}_t^\top\bm{\beta}_4,
\end{equation}
where $\bm{q}_t$ is a vector of quarterly dummies accounting for seasonality.
The policy effect is $\beta_1$, the discrete change in divorce propensity at the threshold.
We repeat this estimation for each specialisation type. 

\section{Appendix: Supplementary Figures and Tables}

\setcounter{table}{0}
\renewcommand{\thetable}{C\arabic{table}}

\setcounter{figure}{0}
\renewcommand{\thefigure}{C\arabic{figure}}

\begin{figure}[h!]
  \centering
\caption{Divorce incidence in the Netherlands, conditional on age, by gender}\label{fig_age_dif} 
\includegraphics[trim={0 0 0 0},clip,width=0.85\textwidth]{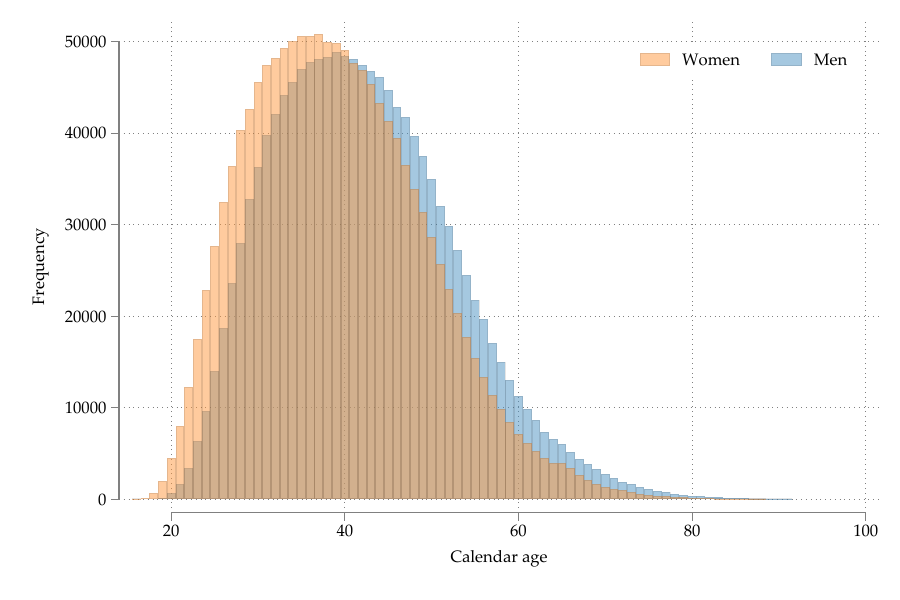} 
\begin{tablenotes}
Data from Statistics Netherlands (CBS). The histogram plots the age distribution of men and women who divorced between years 1995 and 2020 .
\end{tablenotes}
\end{figure}

\begin{table}[h!]
\centering
\caption{Association between flash divorce and demographics}
\begin{tabular}{lccc}
\hline
\textbf{Flash Divorce} & \textbf{Coefficient} & \textbf{Std.Err.} \\
\hline
Native + 1st gen immigrant & -0.0615***                 & (0.0019) \\
Native + 2nd gen immigrant & -0.0103***                 & (0.0020) \\
Both 1st gen immigrants & -0.1060***                    & (0.0019) \\
1st + 2nd gen immigrant & -0.0923***                    & (0.0035) \\
Both 2nd gen immigrants & -0.0520***                    & (0.0060) \\
Age husband             & \textcolor{white}{-}0.0006*** & (0.0002) \\
Age wife                & -0.0007***                    & (0.0001) \\
Age differential        & -0.0004*\textcolor{white}{**} & (0.0002) \\
years married           & -0.0014***                    & (0.0001) \\
one post-sec degree     & \textcolor{white}{-}0.0179*** & (0.0035) \\
both post-sec degrees   & \textcolor{white}{-}0.0306*** & (0.0033) \\
education missing (at least partially) & \textcolor{white}{-}0.0284*** & (0.0029) \\
husband breadwinner     & \textcolor{white}{-}0.0146*** & (0.0025) \\
wife breadwinner        & \textcolor{white}{-}0.0114*** & (0.0029) \\
Both employed           & \textcolor{white}{-}0.0063*\textcolor{white}{**} & (0.0025) \\
Income husband          & \textcolor{white}{-}0.7320*** & (0.0540) \\
Income wife             & \textcolor{white}{-}0.0021*** & (0.0001) \\
Income differential     & -0.0012***                    & (0.0001) \\
Constant                & \textcolor{white}{-}0.0991*** & (0.0050) \\
\hline
Observations            & 287,746 \\
\hline
\end{tabular}
\label{tab:flashreg}
\begin{tablenotes}
Standard errors in parentheses, *** $p<0.01$, ** $p<0.05$, * $p<0.1$
\end{tablenotes}
\end{table}

\begin{table}[H]
\centering
\caption{Surplus Matrix $\Phi(t, d)$: Specification with time trends}
\begin{tabular}{@{}c lccc@{}} 
\label{Table_3x3t}
 &  & \multicolumn{3}{c}{Women} \\
 \cline{3-5}
 &  & unmarried & married, home prod. & married, work \\
\cline{2-5}
\multirow{3}{*}{\rotatebox[origin=c]{90}{Men}} 
 & unmarried & $-0.91 + 0.54 t$ & $\cdot$ & $\cdot$ \\[1ex]
 & married, home prod. & $\cdot$ & $ 1.53 -0.06 t - 0.12 d$ & $ 0.29 - 0.03 t - 0.00 d$ \\[1ex]
 & married, work & $\cdot$ & $ 1.59 + 0.11 t - 0.00 d$ & $ 1.95 + 0.04t - 0.06 d$ \\[1ex]
\cline{2-5}
\end{tabular}
\begin{tablenotes}
  Scales: $ \log(\sigma) = (0, -1.55, -1.62) $, loss: $ D_{\text{KL}} = 0.21 $, time trend: $ t \in [0,1] $. 
\end{tablenotes}
\end{table}

\end{document}